\documentclass[letterpaper, 10 pt, conference]{ieeeconf}
\IEEEoverridecommandlockouts
\renewcommand{\baselinestretch}{0.97}

\usepackage{cite}
\usepackage{amsmath,amssymb,amsfonts}

\usepackage{amsthm}
\usepackage{amsthm}
\usepackage{tikz}
\usetikzlibrary{arrows.meta,positioning,calc,fit,backgrounds,shapes.misc}

\newtheorem{definition}{Definition}[section]

\makeatletter
\let\olddefinition\definition
\let\oldenddefinition\enddefinition
\renewenvironment{definition}[1][]
  {\olddefinition[#1]\itshape}
  {\oldenddefinition}
\makeatother

\usepackage{algorithmic}
\usepackage{algorithm}

\usepackage{graphicx}
\usepackage{textcomp}
\usepackage{xcolor}

\usepackage{bbm}
\usepackage{optidef}
\usepackage{pgfplots}
\pgfplotsset{compat=1.18}

\usepackage{hyperref}
\usepackage[nameinlink,noabbrev]{cleveref}

\theoremstyle{plain}
\newtheorem{theorem}{Theorem}[section]

\newtheorem{lemma}{Lemma}[section]

\newtheorem{problem}{Problem}[section]
\newtheorem{exmp}{Example}[section]

\newtheorem{assumption}{Assumption}[section]

\theoremstyle{remark}
\newtheorem{remark}[theorem]{Remark}

\usepackage{bbm}

\newcommand{\E}{\mathbb{E}}

\def\BibTeX{{\rm B\kern-.05em{\sc i\kern-.025em b}\kern-.08em
    T\kern-.1667em\lower.7ex\hbox{E}\kern-.125emX}}
    
\title{ \LARGE \bf Mean-Field Control of Adherence in Participation-Coupled Vehicle Rebalancing Systems
}
\author{Avalpreet Singh Brar$^{1}$, Rong Su$^{1}$, Jaskaranveer Kaur$^{2}$, Gioele Zardini$^{3}$  % <-this % stops a space
\thanks{This work was supported by the National Research Foundation, Singapore through its Medium Sized Center for Advanced Robotics Technology Innovation (CARTIN) under Project WP2.7.}
\thanks{$^{1}$School of Electrical and Electronic Engineering, Nanyang Technological University, Singapore. (brar0002@e.ntu.edu.sg, rsu@ntu.edu.sg).}
\thanks{$^{2}$Aumovio SE, Singapore ({jaskaranveer.kaur@aumovio.com}).}
\thanks{$^{3}$Laboratory for Information and Decision Systems, Massachusetts Institute of Technology, Cambridge, MA, USA. (gzardini@mit.edu).}
}

\begin{document}

\maketitle

\begin{abstract}
Human driver participation is a critical source of uncertainty in Mobility-on-Demand (MoD) rebalancing.
Drivers follow platform recommendations probabilistically, and their willingness to comply evolves with experienced outcomes. 
This creates a closed-loop feedback in which stronger recommendations increase participation, participation increases congestion, congestion lowers allocation success, and realized allocations update adherence beliefs. 
We propose a microscopic stochastic model that couples (i) belief-driven participation, (ii) Poisson demand, (iii) uniform matching, and (iv) Beta--Bernoulli belief updates. 
Under a large-population closure, we derive a deterministic mean-field recursion for the population adherence state under platform actuation.
For i.i.d.\ Poisson demand and constant recommendation intensity, we prove global well-posedness and invariance of the recursion, establish equilibrium existence, provide uniqueness conditions, and show global convergence in the regime where platform recommendations are no weaker than baseline participation. 
We then define steady-state adherence and throughput, characterize the induced performance frontier, and show that adherence and throughput cannot, in general, be simultaneously maximized under uniform time-invariant actuation. 
This yields a throughput-maximization problem with an adherence floor. 
Exploiting the monotone frontier structure, we show the optimal uniform time-invariant policy is the maximal feasible recommendation intensity and provide an efficient bisection-based algorithm.
\end{abstract}

\begin{keywords}
mean-field control, mobility systems, belief dynamics, stochastic learning
\end{keywords}

\section{Introduction}
Vehicle rebalancing addresses the persistent spatial and temporal mismatch between vehicle supply and passenger demand in Mobility-on-Demand (MoD) systems \cite{ZardiniAnnRev2022,cassandras2025control}. Macroscopic dynamical models capture the aggregate evolution of passenger queues and vehicle distributions, enabling equilibrium and stability analysis of ride-hailing systems \cite{guo2021robust, nilsson2023equilibrium}. To support scalable implementation in large urban networks, hierarchical control architectures decompose repositioning decisions across spatial layers \cite{beojone2023two}.

More recent research has explored learning-based methods to address the computational challenges of large-scale fleet control \cite{brar2021dynamic}. Reinforcement learning and graph-based approaches can learn scalable real-time policies that coordinate vehicle movements using the observed system state \cite{singhal2024real}. Other work has investigated the joint design of rebalancing with related operational decisions, including charging operations \cite{singh2024multi, brar2022supply}, dynamic pricing \cite{li2025learning}, and compatibility-based matching mechanisms \cite{brar2025maximal}. In parallel, broader objectives such as fairness and equity in mobility services have also been incorporated into routing and control frameworks \cite{11423297,bai2025routing,brar2020ensuring}. 

Another stream of work examines the strategic interactions that arise in multi-agent mobility markets. Game-theoretic formulations analyze the interaction between centrally controlled fleets and other traffic participants \cite{paparella2025analysis}, while market-oriented approaches study how incentives and pricing mechanisms influence fleet behavior \cite{brar2024integrated}. Examples include receding-horizon game models for ride-hailing markets \cite{maljkovic2024receding} and learning-based methods for computing Stackelberg equilibria in incentive design problems \cite{maddux2025no}. Together, these studies demonstrate that modern rebalancing frameworks increasingly combine optimization, learning, and economic mechanisms.

A key limitation of many rebalancing formulations is that decisions are executed by human drivers whose participation is uncertain. Drivers follow platform recommendations probabilistically based on their confidence and past outcomes. Recent work therefore models adherence explicitly by treating drivers as probabilistic decision makers whose willingness to follow recommendations evolves with realized allocations \cite{brar2024vehicle,lindstroem2025taxi}. In such systems, stronger recommendations increase participation, higher participation increases competition for demand, reduced allocation success lowers confidence, and realized allocations update adherence beliefs. This participation–congestion feedback creates a closed-loop dynamical system in which both adherence evolution and steady-state system performance depend on the recommendation policy.

The main contributions of this paper are as follows.
First, we formulate a stochastic microscopic model of vehicle rebalancing under adherence uncertainty that captures participation, congestion, and belief updates from realized allocations, and obtain a deterministic mean-field recursion for the population-level adherence dynamics under platform recommendations.
Second, we analyze the mean-field recursion, establishing global well-posedness and convergence properties of the adherence dynamics, and characterize equilibrium behavior under constant control.
Finally, we characterize the steady-state trade-off between adherence and throughput and show that these objectives cannot be simultaneously maximized under a uniform time-invariant control policy, leading to a design problem of selecting the largest recommendation intensity satisfying a desired adherence level, for which we propose an efficient bisection-based algorithm.

The remainder of the paper is organized as follows: \Cref{sec:model} presents the stochastic and mean-field formulations, \Cref{sec:theory} develops the theoretical analysis and control design results, and \Cref{sec:numerical} provides numerical simulations.

\section{Model Formulation} \label{sec:model}
We consider a discrete-time stochastic system indexed by $t\in\mathbb{Z}_{\ge 0}$ with a population of $K$ agents (drivers), defined on a filtered probability space $(\Omega,\mathcal{F},\{\mathcal{F}_t\}_{t\ge0},\mathbb{P})$.

At the beginning of epoch $t$, the platform selects a recommendation policy $\{u_i(t) | i\in \mathcal{I}\}$, where,  $\mathcal{I} = \{1,2,...,K\},$ based on past information $\mathcal{F}_{t-1}$.

\subsection{Platform Recommendation Policy}

The platform determines participation recommendations at each epoch.

\begin{definition}[Platform recommendation probability]
\label{def:platform_u}
For each agent $i \in \mathcal{I}$ and epoch $t$, let 
$u_i(t) \in [0,1]$ denote the probability that the platform issues a participation recommendation to agent $i$. The control process $\{u_i(t) | i\in \mathcal{I}\}$ is assumed to be $\mathcal{F}_{t-1}$-measurable.
\end{definition}

After that, agents decide whether to participate. With booked exogenous demand $D(t)$, real-time matching is performed among active agents. The final allocation outcomes $\{A_i(t)| i\in \mathcal{I}\}$ are observed by all agents, and relevant believes will be updated.

\subsection{Probability Space and Information Structure}
For each epoch~$t$, define the random vector
\begin{equation*}
\begin{split}
    Y(t):=\bigl(B_1(t),&\dots,B_K(t),\; D(t),\; A_1(t),\dots,A_K(t)\bigr)\\
    &\in
\Omega_t:=\{0,1\}^K\times\mathbb{Z}_{\ge0}\times\{0,1\}^K,
\end{split}
\end{equation*}
including, in order, participation indicators, demand, and allocation outcomes, where $B_i(t)=1$ if and only if driver $i$ participates at time $t$, and $A_i(t)=1$ if and only if driver $i$ gets allocated at $t$.

Let $\Omega:=\prod_{t=0}^{\infty}\Omega_t$ and write a sample point as $\omega=(\omega_0,\omega_1,\dots)$ with $\omega_t=Y(t)$.
We equip $\Omega$ with the canonical product $\sigma$-algebra $\mathcal{F}:=\sigma(\cup_{t\ge0}\mathcal{F}_t)$ generated by the natural filtration
\[
\mathcal{F}_t:=\sigma\bigl(Y(0),Y(1),\dots,Y(t)\bigr).
\]
All control actions are required to be nonanticipative: the control process $\{u_i(t) | i\in \mathcal{I}\}$ is $\mathcal{F}_{t-1}$-measurable.

\subsection{Agent Belief State and Adherence}
Each agent~$i$ maintains a Beta-parameterized belief (posterior) about the probability that participating under platform coordination yields a successful allocation~\cite{josang2002beta}.

\begin{definition}[Belief state]
For each agent $i\in\{1,\dots,K\}$, let $(\alpha_i(t),\beta_i(t))\in\mathbb{R}_{>0}^2$
denote Beta parameters at time $t$, and define the \emph{belief state} (posterior mean); i.e., the trust of driver $i$ in the recommendation as:
\begin{equation}\label{eq:belief_mean}
x_i(t):=\frac{\alpha_i(t)}{\alpha_i(t)+\beta_i(t)}\in[0,1].
\end{equation}
\end{definition}

Successful allocations increase $\alpha_i$, and unsuccessful participation increase $\beta_i$.

\begin{assumption}[Belief-driven adherence]\label{assumption:exposure}
At epoch $t$, conditional on $\mathcal{F}_{t-1}$, agent $i$ is \emph{adherent} to the platform
(within-epoch commitment to follow recommendations) with probability $x_i(t)$.
These adherence commitments are conditionally independent across agents given $\mathcal{F}_{t-1}$~\cite{brar2024vehicle}.
\end{assumption}

\begin{remark}[Measurability]
Since $(\alpha_i(t),\beta_i(t))$ are updated using information revealed at epoch $t-1$, the belief $x_i(t)$ is $\mathcal{F}_{t-1}$-measurable.    
\end{remark}

\subsection{Participation Decision}

Let $p_i\in[0,1]$ denote the \emph{baseline} participation probability of agent $i$ when the agent is not adherent (i.e., not committing to follow platform recommendations).

Under Assumption~\ref{assumption:exposure} and Definition~\ref{def:platform_u},
the effective participation probability is the mixture
\begin{equation}\label{eq:qi_def}
q_i(t):=(1-x_i(t))\,p_i + x_i(t)\,u_i(t).
\end{equation}

\begin{assumption}[Conditional independence of participation]\label{ass:cond_indep_particip}
Conditional on $\mathcal{F}_{t-1}$, the participation indicators
$\{B_i(t) | i\in \mathcal{I}\}$ are independent and satisfy
\[
B_i(t)\mid \mathcal{F}_{t-1}\sim\mathrm{Bernoulli}(q_i(t)).
\]
\end{assumption}

\subsection{Endogenous Congestion}

For a tagged agent $i$, define the (endogenous) congestion as the number of \emph{other} agents who participate at epoch $t$:
\begin{equation}\label{eq:congestion_def}
M_{-i}(t):=\sum_{j\neq i} B_j(t).
\end{equation}
By Assumption~\ref{ass:cond_indep_particip}, conditional on $\mathcal{F}_{t-1}$,
\begin{equation}\label{eq:poisson_binomial}
M_{-i}(t)\sim\mathrm{PoissonBinomial}\big(\{q_j(t)\}_{j\neq i}\big).
\end{equation}
Moreover, because $B_i(t)$ is conditionally independent of $\{B_j(t)\}_{j\neq i}$ given $\mathcal{F}_{t-1}$,
\begin{equation}\label{eq:M_indep_Bi}
\mathbb{P}\bigl(M_{-i}(t)=k\mid B_i(t)=1,\mathcal{F}_{t-1}\bigr)
=
\mathbb{P}\bigl(M_{-i}(t)=k\mid \mathcal{F}_{t-1}\bigr)
\end{equation}

\subsection{Demand, Matching, and Allocation}

Let $D(t)\in\mathbb{Z}_{\ge0}$ denote exogenous passenger demand at epoch $t$.

\begin{assumption}[Exogenous demand]\label{ass:exo_demand}
The demand process $\{D(t)\}_{t\ge0}$ is independent of $\mathcal{F}_{t-1}$ and of the within-epoch participation decisions at time $t$, and satisfies
\[
D(t)\sim\mathrm{Poisson}(\lambda(t)),
\]
for a (possibly time-varying) deterministic rate $\lambda(t)>0$.
\end{assumption}

Let $N(t):=\sum_{j=1}^K B_j(t)$ denote the number of active agents.
We assume a capacity-$1$ uniform matching mechanism: given $(D(t),N(t))=(d,n)$, exactly $\min\{d,n\}$ active agents are selected uniformly at random (without replacement) and allocated one passenger each.

For a tagged agent $i$ with $B_i(t)=1$ and $M_{-i}(t)=k$ (so $N(t)=k+1$), the conditional allocation probability equals
\begin{equation}\label{eq:pi_def}
\begin{aligned}
\pi(d,k):&=\mathbb{P}\bigl(A_i(t)=1\mid D(t)=d,M_{-i}(t)=k,B_i(t)=1\bigr)\\
&=\min\!\left(1,\frac{d}{k+1}\right).
\end{aligned}
\end{equation}

\begin{definition}[Conditional allocation probability]\label{def:si_def}
The allocation probability of agent $i$, conditional on $B_i(t)=1$ and $\mathcal{F}_{t-1}$, is
\begin{equation}\label{eq:si_sum}
s_i(t)
:=
\mathbb{E}\!\left[\,
\pi\bigl(D(t),M_{-i}(t)\bigr)\ \bigm|\ B_i(t)=1,\mathcal{F}_{t-1}
\right].
\end{equation}
Equivalently, using \eqref{eq:M_indep_Bi} and the law of total expectation,
\begin{equation}\label{eq:si_expanded}
s_i(t)
=
\sum_{k=0}^{K-1}\mathbb{P}\bigl(M_{-i}(t)=k\mid \mathcal{F}_{t-1}\bigr)\,
\mathbb{E}\!\left[\pi(D(t),k)\right],
\end{equation}
where the outer probability is w.r.t.\ the Poisson-binomial law \eqref{eq:poisson_binomial}
and the inner expectation is w.r.t.\ $D(t)$.
\end{definition}

\subsection{Belief Update Dynamics}
Let $A_i(t)\in\{0,1\}$ denote the allocation indicator for agent $i$ at epoch $t$.
Under the matching mechanism above,
\begin{equation}\label{eq:Ai_marginal}
A_i(t)\mid(\mathcal{F}_{t-1},B_i(t))
\sim
\begin{cases}
\mathrm{Bernoulli}\bigl(s_i(t)\bigr), & B_i(t)=1,\\
0, & B_i(t)=0.
\end{cases}
\end{equation}
We emphasize that $\{A_i(t) | i\in \mathcal{I}\}$ are generally \emph{not} independent across $i$, but \eqref{eq:Ai_marginal} correctly captures the marginal needed for belief updates.

Belief parameters are updated via the Beta--Bernoulli conjugate rule:
\begin{equation}\label{eq:beta_update}
\alpha_i(t+1)=\alpha_i(t)+A_i(t),\
\beta_i(t+1)=\beta_i(t)+(1-A_i(t))B_i(t).
\end{equation}
Define the total pseudo-count of  of successful and unsuccessful allocations as
\begin{equation}\label{eq:ni_def}
n_i(t):=\alpha_i(t)+\beta_i(t),
\end{equation}
which evolves as
\begin{equation}\label{eq:ni_update}
n_i(t+1)=n_i(t)+B_i(t).
\end{equation}
Combining \eqref{eq:belief_mean} and \eqref{eq:beta_update}--\eqref{eq:ni_update} and using \(A_i(t) = A_i(t) B_i(t)\) yields
\begin{equation}\label{eq:xi_update}
\begin{aligned}
x_i(t+1)&=\frac{\alpha_i(t)+A_i(t)}{n_i(t)+B_i(t)}\\
&=x_i(t)+\frac{B_i(t)}{n_i(t)+B_i(t)}\bigl(A_i(t)-x_i(t)\bigr),
\end{aligned}
\end{equation}
so beliefs are updated only upon participation.

\subsection{Mean-Field Approximation}
To study aggregate behavior, we restrict attention to stationary, uniform recommendation policies.

\begin{assumption}[Uniform recommendation intensity]\label{ass:uniform_u}
Recommendations are issued with a uniform, time-invariant intensity
$u_i(t)=u$ for all $i$ and $t$, with $u\in[0,1]$.
\end{assumption}

Define the empirical averages
\begin{equation}\label{eq:empirical_means}
\bar\alpha(t):=\frac{1}{K}\sum_{i=1}^K\alpha_i(t),\quad
\bar n(t):=\frac{1}{K}\sum_{i=1}^K n_i(t),\quad
\bar x(t):=\frac{\bar\alpha(t)}{\bar n(t)}.
\end{equation}
From \eqref{eq:beta_update} and \eqref{eq:ni_update},
\begin{equation}\label{eq:bar_updates_raw}
\bar\alpha(t+1)=\bar\alpha(t)+\frac{1}{K}\sum_{i=1}^K A_i(t)
\bar n(t+1)=\bar n(t)+\frac{1}{K}\sum_{i=1}^K B_i(t).
\end{equation}

Taking conditional expectation given $\mathcal{F}_{t-1}$ and using
$\mathbb{E}[A_i(t)\mid\mathcal{F}_{t-1}]=q_i(t)s_i(t)$ and
$\mathbb{E}[B_i(t)\mid\mathcal{F}_{t-1}]=q_i(t)$ yield
\begin{equation}\label{eq:bar_cond_exp}
\begin{aligned}
\mathbb{E}[\bar\alpha(t+1)\mid\mathcal{F}_{t-1}]
&=
\bar\alpha(t)+\frac{1}{K}\sum_{i=1}^K q_i(t)s_i(t)\\
\mathbb{E}[\bar n(t+1)\mid\mathcal{F}_{t-1}]
&=
\bar n(t)+\bar q(t),    
\end{aligned}
\end{equation}
where $\bar q(t):=\frac{1}{K}\sum_{i=1}^K q_i(t)$.

\begin{figure}[t]
\centering
\includegraphics[width=0.7\linewidth]{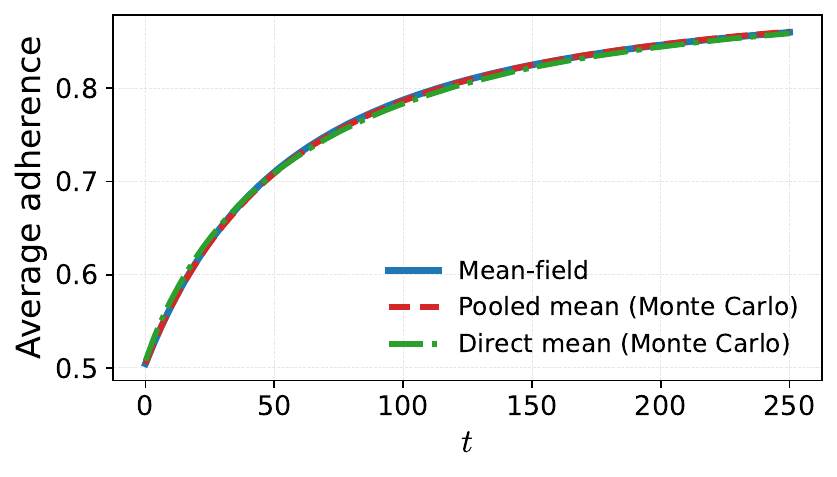}
\caption{Mean-field trajectory versus microscopic simulation for \(K=100\) agents under strong heterogeneity (\(\alpha_i(0),\beta_i(0)\sim U(1,50)\), \(p_i\sim U(0,1)\)). Demand follows \(D(t)\sim\mathrm{Poisson}(80)\) with recommendation intensity \(u=0.9\). The dashed and dotted curves show the Monte Carlo averages of the pooled and direct population means across \(100\) runs, illustrating that the mean-field approximation remains accurate even under large heterogeneity.}
\label{fig:mf_vs_micro}
\end{figure}

\begin{figure*}[t]
\centering
\begin{tikzpicture}[
    font=\small,
    >=Latex,
    node distance=11mm,
    box/.style={
        draw,
        rounded corners=3pt,
        align=center,
        minimum height=8mm,
        minimum width=20mm,
        inner sep=3pt,
        line width=0.6pt
    },
    micro/.style={box, draw=blue!60!black, fill=blue!6},
    mfbox/.style={box, draw=teal!60!black, fill=teal!6, minimum width=70mm},
    ctrl/.style={box, draw=purple!70!black, fill=purple!8, minimum width=16mm, minimum height=6mm},
    arr/.style={->, thick, draw=blue!60!black},
    arrctrl/.style={->, thick, draw=purple!70!black},
    exo/.style={box, draw=orange!70!black, fill=orange!10, minimum width=16mm, minimum height=6mm},
    exoarr/.style={->, thick, draw=orange!70!black},
    approx/.style={->, thick, dashed, draw=black!55},
    highlight/.style={draw=black!55, dashed, rounded corners=4pt, inner xsep=4pt, inner ysep=10pt}
]

% =========================
% Microscopic closed loop
% =========================
\node[micro] (belief) {\textbf{Belief}\\ $x_i(t)$};
\node[micro, right=of belief] (particip) {\textbf{Participation}\\ $q_i(t)$};
\node[micro, right=of particip] (congestion) {\textbf{Congestion}\\ $M_{-i}(t)$};
\node[micro, right=of congestion] (alloc) {\textbf{Allocation}\\ $s_i(t)$};
\node[micro, right=of alloc] (update) {\textbf{Update}\\ $x_i(t+1)$};

% forward arrows
\draw[arr] (belief) -- (particip);
\draw[arr] (particip) -- (congestion);
\draw[arr] (congestion) -- (alloc);
\draw[arr] (alloc) -- (update);

% feedback
\draw[arr]
(update.north) -- ++(0,8mm) -| (belief.north);

% exogenous demand
\node[exo, align=center, below=6mm of alloc] (demand) {\textbf{Demand}\\ $D(t)$};
\draw[exoarr] (demand.north) -- (alloc.south);

% control input
\node[ctrl, below=6mm of particip] (control) {\textbf{Control}\\ $u_i(t)$};
\draw[arrctrl] (control.north) -- (particip.south);

% dotted highlight around the microscopic block
\node[highlight, fit=(belief)(particip)(congestion)(alloc)(update)(control)(demand)] (microblock) {};

% label for microscopic model
\node[above=4mm of microblock, font=\bfseries] 
{Microscopic Model of Adherence-Coupled Matching Dynamics};

% =========================
% Mean-field block
% =========================
\node[mfbox, below=26mm of congestion] (mf) {
\textbf{Mean-Field Approximation}\\[2pt]
$\displaystyle
\bar x(t+1) =
\bar x(t) + \frac{\bar q(t)}{\bar n(t)+\bar q(t)} \bigl(s(t)-\bar x(t)\bigr)$\\

$\displaystyle
s(t) = \E\!\left[\min\!\left(1,\frac{D(t)}{1+(K-1)\bar q(t)}\right)\right], \quad
\bar n(t+1) = \bar n(t) + \bar q(t), \quad
\bar q(t) = (1-\bar x(t))\,p+u\,\bar x(t)$};

% arrow from microscopic to mean-field
\draw[approx]
(microblock.south)
-- node[right, xshift=1mm]{\scriptsize large-population closure}
(mf.north);

\end{tikzpicture}
\caption{Closed-loop adherence dynamics: microscopic stochastic model (finite $K$) and its mean-field approximation obtained via large-population averaging.}
\end{figure*}
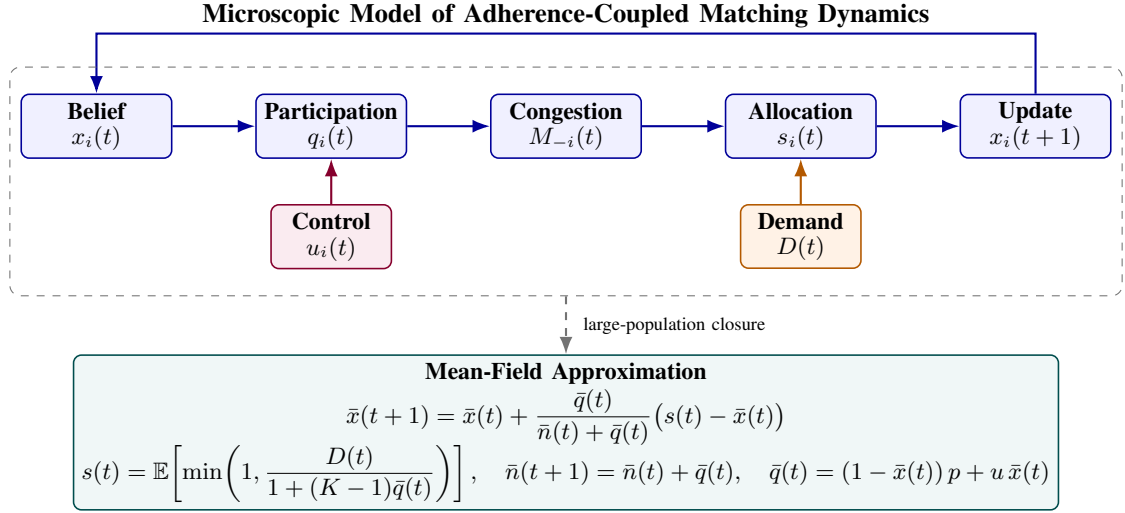

\paragraph{Deterministic closure}
Motivated by large-population averaging (law of large numbers and concentration),
we adopt the mean-field approximation
\begin{equation}\label{eq:mf_closure}
\frac{1}{K}\sum_{i=1}^K q_i(t)s_i(t)\approx \bar q(t)\,s(t),
\quad
\text{where} \ s_i(t)\approx s(t) \ \forall \ i,
\end{equation} 
where $s(t)$ is the representative allocation probability under average congestion.
Under Assumption~\ref{ass:uniform_u}, we further approximate
\begin{equation}\label{eq:qbar_mf}
\bar q(t)\approx (1-\bar x(t))\,p + u\,\bar x(t),
\qquad
p:=\frac{1}{K}\sum_{i=1}^K p_i.
\end{equation}
Finally, we approximate the congestion faced by a tagged agent by its mean
$\mathbb{E}[M_{-i}(t)]\approx (K-1)\bar q(t)$, which yields
\begin{equation}\label{eq:s_mf}
s(t)
=
\mathbb{E}\!\left[\min\!\left(1,\frac{D(t)}{1+(K-1)\bar q(t)}\right)\right].
\end{equation}

Substituting \eqref{eq:mf_closure}--\eqref{eq:s_mf} into \eqref{eq:bar_updates_raw} produces the deterministic mean-field recursion
\begin{equation}\label{eq:mf_alpha_n}
\bar\alpha(t+1)=\bar\alpha(t)+\bar q(t)s(t),\qquad
\bar n(t+1)=\bar n(t)+\bar q(t).
\end{equation}
Since $\bar x(t)=\bar\alpha(t)/\bar n(t)$, we obtain the closed recursion
\begin{equation}\label{eq:mf_x}
\bar x(t+1)
=
\bar x(t)
+
\frac{\bar q(t)}{\bar n(t)+\bar q(t)}
\bigl(s(t)-\bar x(t)\bigr),
\end{equation}
with $\bar q(t)$ given by \eqref{eq:qbar_mf} and $s(t)$ by \eqref{eq:s_mf}.

\begin{exmp}[Accuracy of the mean-field approximation]\label{ex:mf_accuracy}
For run \(m\), let \( \bar{x}_K^{(m)}(t)=\frac{1}{K}\sum_{i=1}^{K}x_i^{(m)}(t) \) and \( \bar{x}_{\mathrm{pool}}^{(m)}(t)=\frac{\sum_i \alpha_i^{(m)}(t)}{\sum_i n_i^{(m)}(t)} \). Figure~\ref{fig:mf_vs_micro} shows the mean-field trajectory and the Monte Carlo averages over \(m=100\) runs, demonstrating close agreement.
\end{exmp}

\subsection{Control-Dependent Performance Metrics}
Fix a uniform time-invariant control policy $u(t)\equiv u\in[0,1]$ and let $\bar x(t;u)$ denote the mean-field trajectory \eqref{eq:mf_x}.
Define the induced participation and allocation probabilities
\begin{equation*}
\begin{aligned}
\bar q(t;u)&:=(1-\bar x(t;u))p+u\,\bar x(t;u),\\
s(t;u)&:=
\mathbb{E}\!\left[\min\!\left(1,\frac{D(t)}{1+(K-1)\bar q(t;u)}\right)\right].
\end{aligned}
\end{equation*}

Define steady-state adherence and throughput
\begin{equation}\label{eq:steadystate_defs}
x^\infty(u):=\lim_{t\to\infty}\bar x(t;u),
\qquad
\Gamma^\infty(u):=\lim_{t\to\infty}\bar q(t;u)\,s(t;u).
\end{equation}
We also define the $\varepsilon$-convergence time
\[
T_\varepsilon(u)
:=
\inf\Bigl\{t\ge0:\ |\bar x(\tau;u)-x^\infty(u)|\le\varepsilon\ \ \forall\tau\ge t\Bigr\}.
\]

\section{Theoretical Analysis} \label{sec:theory}
We analyze the deterministic mean-field recursion under a uniform time-invariant control $u$
and time-homogeneous demand.

\subsection{Standing assumptions for analysis}
Throughout this section, assume $D(t)\overset{\text{i.i.d.}}{\sim}\mathrm{Poisson}(\lambda)$ with $\lambda>0$ and $u(t)\equiv u$.
Define the auxiliary function, for $a>0$,
\begin{equation}\label{eq:g_def}
g(a):=\mathbb{E}\!\left[\min\!\left(1,\frac{D}{a}\right)\right],\qquad D\sim\mathrm{Poisson}(\lambda).
\end{equation}
Then the mean-field allocation term can be written as $s(t)=g(1+(K-1)\bar q(t))$.

% ============================================================
\subsection{Well-Posedness of the Mean-Field Recursion}
% ============================================================
\begin{lemma}[Well-posedness and invariance]\label{lem:wellposed}
Fix $K\in\mathbb{N}$ and $p,u\in[0,1]$. Let $\bar x(0)\in[0,1]$ and $\bar n(0)>0$.
For $t\ge0$ define
\begin{align}
\bar q(t)&:= (1-\bar x(t))p+u\bar x(t),\label{eq:mf_q_def_re}\\
s(t)&:= g\bigl(1+(K-1)\bar q(t)\bigr),\label{eq:mf_s_def_re}\\
\bar n(t+1)&:=\bar n(t)+\bar q(t),\label{eq:mf_n_update_re}\\
\bar x(t+1)&:=\bar x(t)+\frac{\bar q(t)}{\bar n(t)+\bar q(t)}\bigl(s(t)-\bar x(t)\bigr).\label{eq:mf_x_update_re}
\end{align}
Then $(\bar x(t),\bar n(t))$ is uniquely defined for all $t\ge0$ and satisfies
\[
\bar x(t)\in[0,1],\qquad \bar n(t)\ge \bar n(t-1)>0.
\]
\end{lemma}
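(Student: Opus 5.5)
The plan is induction on $t$, with the inductive hypothesis $H(t)$: $\bar x(t)\in[0,1]$ and $\bar n(t)>0$. The base case $H(0)$ is exactly the assumption on the initial data. For the inductive step, assume $H(t)$. The first task is to check that every quantity in \eqref{eq:mf_q_def_re}--\eqref{eq:mf_x_update_re} is well defined and lies in the right range. Since $\bar q(t)=(1-\bar x(t))p+\bar x(t)u$ is a convex combination of $p,u\in[0,1]$, we get $\bar q(t)\in[0,1]$. Hence $a:=1+(K-1)\bar q(t)\ge 1>0$, so $g(a)$ from \eqref{eq:g_def} is defined.

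Next, we bound $s(t)$. The random variable $\min(1,D/a)$ takes values in $[0,1]$ because $D\ge0$ and $a>0$. It is therefore integrable, and $s(t)=g(a)\in[0,1]$. The update \eqref{eq:mf_n_update_re} then gives $\bar n(t+1)=\bar n(t)+\bar q(t)\ge\bar n(t)>0$. This also shows that the denominator $\bar n(t)+\bar q(t)$ in \eqref{eq:mf_x_update_re} is strictly positive, so $\bar x(t+1)$ is a single well-defined real number. Uniqueness follows automatically, because each step is an explicit deterministic map of $(\bar x(t),\bar n(t))$.

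For invariance, set $\gamma:=\bar q(t)/(\bar n(t)+\bar q(t))$. Since $\bar q(t)\ge0$ and $\bar n(t)>0$, we have $\gamma\in[0,1)$. Rewriting \eqref{eq:mf_x_update_re} as
\[
\bar x(t+1)=(1-\gamma)\,\bar x(t)+\gamma\, s(t)
\]
exhibits $\bar x(t+1)$ as a convex combination of two points of $[0,1]$, so $\bar x(t+1)\in[0,1]$. This establishes $H(t+1)$ together with the monotonicity $\bar n(t+1)\ge\bar n(t)$, which closes the induction. The claim $\bar n(t)\ge\bar n(t-1)>0$ is then read for $t\ge1$.

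There is no substantial obstacle here. The only points needing care are the positivity of the denominator, which relies on the strict inequality $\bar n(t)>0$ being propagated, and the bound $s(t)\le1$, which needs the truncation $\min(1,\cdot)$ rather than any property of the Poisson law.
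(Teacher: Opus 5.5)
Your proposal is correct and follows essentially the same route as the paper. Both arguments use induction on $t$, note that $\bar q(t)\in[0,1]$ and $s(t)=g(\cdot)\in[0,1]$ so that $\bar n(t+1)\ge\bar n(t)>0$, and write the update as the convex combination $(1-\gamma)\bar x(t)+\gamma s(t)$; your explicit checks that $1+(K-1)\bar q(t)\ge 1$ (so $g$ is defined) and that the denominator $\bar n(t)+\bar q(t)$ is strictly positive are refinements the paper leaves implicit.
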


\begin{proof}
The proof is by induction on $t$.
The base case holds by assumption.
If $\bar x(t)\in[0,1]$, then $\bar q(t)\in[0,1]$ by \eqref{eq:mf_q_def_re}, and since $g(\cdot)\in[0,1]$,
we have $s(t)\in[0,1]$. Equation \eqref{eq:mf_n_update_re} gives $\bar n(t+1)\ge \bar n(t)>0$.
Let $\gamma(t):=\bar q(t)/(\bar n(t)+\bar q(t))\in[0,1]$, then \eqref{eq:mf_x_update_re} becomes
$\bar x(t+1)=(1-\gamma(t))\bar x(t)+\gamma(t)s(t)$, a convex combination of $[0,1]$ values.
Thus $\bar x(t+1)\in[0,1]$, completing the induction.
Uniqueness follows since \eqref{eq:mf_n_update_re}--\eqref{eq:mf_x_update_re} define a deterministic map.
\end{proof}

% ============================================================
\subsection{Equilibrium Analysis}
% ============================================================
We study fixed points under i.i.d.\ $D\sim\mathrm{Poisson}(\lambda)$.
Define, for $x\in[0,1]$,
\begin{equation}\label{eq:q_a_s_of_x}
\begin{aligned}
&q(x):=p+(u-p)x, \ a(x):=1+(K-1)q(x),\\
&s(x):=g(a(x)).    
\end{aligned}
\end{equation}

\begin{theorem}[Existence]\label{thm:eq_exist}
Fix $K\in\mathbb N$, $p,u\in[0,1]$, and $\lambda>0$. Then there exists $x^\star\in[0,1]$ such that
\begin{equation}\label{eq:fixed_point}
x^\star=s(x^\star)=g\bigl(1+(K-1)(p+(u-p)x^\star)\bigr).
\end{equation}
\end{theorem}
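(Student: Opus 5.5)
The plan is to apply the intermediate value theorem to $h(x):=s(x)-x$ on $[0,1]$. This needs two facts: $h$ is continuous, and $h$ takes opposite signs (weakly) at the endpoints of the interval.

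\textbf{Step 1 (range of $s$).} For $x\in[0,1]$ we have $q(x)=(1-x)p+ux\in[0,1]$, since it is a convex combination of $p,u\in[0,1]$. Hence $a(x)=1+(K-1)q(x)\in[1,K]$, so $a(x)\ge 1>0$ and $g$ is evaluated inside its domain. Because $0\le\min(1,D/a)\le 1$ pointwise, we get $s(x)=g(a(x))\in[0,1]$. This is the same observation used in Lemma~\ref{lem:wellposed}. It follows that $h(0)=s(0)\ge 0$ and $h(1)=s(1)-1\le 0$.

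\textbf{Step 2 (continuity of $g$ on $(0,\infty)$).} This is the only step that needs an actual argument. Write
\[
g(a)=\sum_{d\ge0}e^{-\lambda}\frac{\lambda^d}{d!}\min\!\left(1,\frac{d}{a}\right).
\]
Each summand is continuous in $a>0$, because $a\mapsto\min(1,d/a)$ is continuous there. Each summand is also bounded by the Poisson weight $e^{-\lambda}\lambda^d/d!$, and these weights are summable. The Weierstrass M-test (or dominated convergence with dominating function $1$) therefore gives continuity of $g$. There is an alternative that avoids series arguments: $|\min(1,d/a)-\min(1,d/b)|\le 1$, and for $a,b\ge1$ the map $a\mapsto\min(1,d/a)$ is Lipschitz, so dominated convergence applies directly. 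Since $x\mapsto a(x)$ is affine and takes values in $[1,K]$, the composition $s=g\circ a$ is continuous on $[0,1]$, and hence so is $h$.

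\textbf{Step 3 (conclusion).} Since $h$ is continuous with $h(0)\ge0\ge h(1)$, the intermediate value theorem gives some $x^\star\in[0,1]$ with $h(x^\star)=0$. This is exactly \eqref{eq:fixed_point}. The degenerate case $K=1$ needs no separate treatment, because then $a\equiv1$ and $s$ is constant. Equivalently, one could invoke Brouwer's theorem in one dimension for the continuous self-map $s:[0,1]\to[0,1]$. I expect no real obstacle; the only point needing care is the continuity of $g$ in Step 2. Uniqueness is not claimed here and would need monotonicity of $g$, which I would defer to the subsequent results.
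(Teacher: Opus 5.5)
Your proposal is correct and follows the paper's proof essentially step for step. The paper also gets continuity of $g$ from dominated convergence, uses $s([0,1])\subset[0,1]$ to obtain $f(0)\ge 0\ge f(1)$, and concludes with the intermediate value theorem; your explicit check that $a(x)\in[1,K]$, so that $g$ is only evaluated where it is defined, is a detail the paper leaves implicit.
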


\begin{proof}
The map $g$ is continuous and $g(a)\in[0,1]$ for all $a>0$ by dominated convergence.
Hence $s(\cdot)$ is continuous and maps $[0,1]$ into $[0,1]$.
Let $f(x):=s(x)-x$. Then $f(0)=s(0)\ge0$ and $f(1)=s(1)-1\le0$.
By the intermediate value theorem, $f$ has a root in $[0,1]$.
\end{proof}

\begin{lemma}[Slope of $g$ (Poisson demand)]\label{lem:g_derivative}
Let $D\sim\mathrm{Poisson}(\lambda)$ and $g$ be defined in \eqref{eq:g_def}.
Then $g$ is continuous and nonincreasing on $(0,\infty)$, and for all $a\notin\mathbb{Z}$,
\begin{equation}\label{eq:gprime}
g'(a)=-\frac{\lambda}{a^2}\,\mathbb{P}\bigl(D\le \lfloor a\rfloor-1\bigr).
\end{equation}
Moreover, $g$ is globally Lipschitz on any compact interval $[a_{\min},a_{\max}]\subset(0,\infty)$.
\end{lemma}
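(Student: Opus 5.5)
We will prove the statement by writing $g$ explicitly as a finite sum on each interval between consecutive integers and differentiating term by term. Fix $a>0$ and set $m:=\lfloor a\rfloor$. Since $\min(1,d/a)=d/a$ exactly when $d\le a$, we split the expectation at $m$:
\begin{equation*}
g(a)=\sum_{d=0}^{m}\frac{d}{a}\,\mathbb{P}(D=d)+\mathbb{P}(D\ge m+1)
=\frac{1}{a}\,\mathbb{E}\bigl[D\,\mathbbm{1}\{D\le m\}\bigr]+\mathbb{P}(D\ge m+1).
\end{equation*}
This split is valid for every $a$, including integer $a$. At $d=a$ we have $d/a=1$, so it does not matter which branch the term is assigned to.

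\textbf{Step 1 (Poisson identity).} The key identity is $d\,\mathbb{P}(D=d)=\lambda\,\mathbb{P}(D=d-1)$ for $d\ge1$. Summing it gives
\begin{equation*}
\mathbb{E}\bigl[D\,\mathbbm{1}\{D\le m\}\bigr]=\lambda\,\mathbb{P}(D\le m-1).
\end{equation*}
Hence on each open interval $(m,m+1)$, and on $(0,1)$ where $m=0$,
\begin{equation*}
g(a)=\frac{\lambda}{a}\,\mathbb{P}(D\le m-1)+\mathbb{P}(D\ge m+1).
\end{equation*}
Here $m$ is constant on the interval, so $g$ is smooth there.

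\textbf{Step 2 (derivative).} Differentiating the formula from Step 1 on each open interval gives $g'(a)=-\lambda a^{-2}\,\mathbb{P}(D\le \lfloor a\rfloor-1)$, which is \eqref{eq:gprime}. This quantity is $\le 0$, so $g$ is nonincreasing on each such interval.

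\textbf{Step 3 (continuity across integers).} Continuity is also immediate from dominated convergence. For each $d$, the map $a\mapsto\min(1,d/a)$ is continuous, and it is bounded by $1$. Combining continuity with the fact that $g'\le0$ on every open interval between integers, the mean value theorem applied on each piece shows that $g$ is nonincreasing on all of $(0,\infty)$.

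The only delicate point, which we regard as the main obstacle, is consistency at integers $a=m$. It is worth checking directly that the two one-sided formulas agree. The left formula, on $(m-1,m)$, evaluates at $a=m$ to
\begin{equation*}
\frac{\lambda}{m}\,\mathbb{P}(D\le m-2)+\mathbb{P}(D\ge m).
\end{equation*}
The right formula evaluates at $a=m$ to
\begin{equation*}
\frac{\lambda}{m}\,\mathbb{P}(D\le m-1)+\mathbb{P}(D\ge m+1).
\end{equation*}
Their difference is $\frac{\lambda}{m}\mathbb{P}(D=m-1)-\mathbb{P}(D=m)$, which equals $0$ by the same Poisson identity. So this check confirms that the piecewise representation is continuous at every integer.

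\textbf{Step 4 (Lipschitz).} On a compact interval $[a_{\min},a_{\max}]\subset(0,\infty)$, the bound
\begin{equation*}
|g'(a)|\le \frac{\lambda}{a_{\min}^2}
\end{equation*}
holds wherever $g'$ exists, which is everywhere except at finitely many integers. Because $g$ is continuous and piecewise $C^1$, it is absolutely continuous on the interval. Therefore
\begin{equation*}
|g(a)-g(b)|\le \int_b^a |g'|\le \frac{\lambda}{a_{\min}^2}\,|a-b|,
\end{equation*}
which is the claimed Lipschitz property with constant $\lambda/a_{\min}^2$.
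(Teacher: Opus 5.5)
Your proposal is correct and follows essentially the same route as the paper: on each interval $(n,n+1)$ you split the expectation over $\{D\le n\}$ and $\{D\ge n+1\}$, apply the Poisson identity $\mathbb{E}[D\mathbf{1}_{\{D\le n\}}]=\lambda\,\mathbb{P}(D\le n-1)$, get continuity from dominated convergence, and deduce Lipschitzness from the bounded piecewise derivative. The differences are minor: you derive monotonicity from the sign of $g'$ plus continuity instead of from pointwise monotonicity of $a\mapsto\min(1,D/a)$, you add an explicit (redundant but harmless) check of continuity at the integers, and you state the Lipschitz constant $\lambda/a_{\min}^2$ explicitly.
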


\begin{proof}
Monotonicity and continuity follow since $a\mapsto \min(1,D/a)$ is continuous and nonincreasing for each fixed $D$,
and is dominated by $1$.
For $a\in(n,n+1)$ with $n\in\mathbb{Z}_{\ge0}$, we have $\min(1,D/a)=D/a$ on $\{D\le n\}$ and $=1$ on $\{D\ge n+1\}$,
so $g'(a)=-(1/a^2)\mathbb{E}[D\mathbf{1}_{\{D\le n\}}]$.
Using $\mathbb{E}[D\mathbf{1}_{\{D\le n\}}]=\lambda\,\mathbb{P}(D\le n-1)$ for Poisson $D$
yields \eqref{eq:gprime}. Boundedness of $g'$ on each smooth sub-interval implies Lipschitzness on compacts.
\end{proof}

\begin{theorem}[Uniqueness]\label{thm:eq_unique}
Adopt the setting of \Cref{thm:eq_exist}. Define
\[
a_{\min}:=1+(K-1)u,\qquad a_{\max}:=1+(K-1)p.
\]
\begin{enumerate}
\item If $u\ge p$, then the fixed point \eqref{eq:fixed_point} is unique on $[0,1]$.
\item If $u<p$ and
\begin{equation*}\label{eq:exact_unique_condition_re}
\begin{aligned}
&(K-1)(p-u)\,\lambda\,
\max_{a\in\mathcal{A}}
\frac{\mathbb{P}(D\le \lfloor a\rfloor-1)}{a^2}
<1,\\
&\mathcal{A}:=\{a_{\min}\}\cup\bigl(\mathbb{Z}\cap(a_{\min},a_{\max}]\bigr),
\end{aligned}
\end{equation*}
then the fixed point is unique.
\end{enumerate}
\end{theorem}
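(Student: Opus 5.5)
Both parts rest on studying the root function $f(x):=s(x)-x$ on $[0,1]$, whose existence of a root is given by \Cref{thm:eq_exist}; uniqueness will follow once $f$ is shown to be strictly decreasing. Write $s(x)=g(a(x))$ with $a(x)=1+(K-1)(p+(u-p)x)$, an affine map whose slope is $(K-1)(u-p)$.

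\emph{Part 1 ($u\ge p$).} Here $a(\cdot)$ is nondecreasing in $x$. By \Cref{lem:g_derivative}, $g$ is nonincreasing on $(0,\infty)$, so $s=g\circ a$ is nonincreasing on $[0,1]$. Hence $f(x)=s(x)-x$ is strictly decreasing, being a nonincreasing function minus the identity. A strictly decreasing function has at most one root, and combined with \Cref{thm:eq_exist} the fixed point is unique. No differentiability is needed here.

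\emph{Part 2 ($u<p$).} Now $a$ is decreasing from $a(0)=a_{\max}$ to $a(1)=a_{\min}$, so $s$ is nondecreasing and monotonicity alone does not suffice. The plan is to show that $s$ is a strict contraction, i.e.\ Lipschitz on $[0,1]$ with constant $L<1$. Then for $x<y$,
\[
f(y)-f(x)\le (L-1)(y-x)<0,
\]
so again $f$ has at most one root.

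By \Cref{lem:g_derivative}, $g$ is Lipschitz on $[a_{\min},a_{\max}]$ and piecewise $C^1$, with derivative given by \eqref{eq:gprime} off the integers. The absolutely continuous function $s$ therefore satisfies
\[
|s(y)-s(x)|\le (K-1)(p-u)\,\sup_{a\in(a_{\min},a_{\max})\setminus\mathbb{Z}} |g'(a)|\,|y-x|.
\]
The key step is to show this supremum equals $\lambda\max_{a\in\mathcal{A}}\mathbb{P}(D\le\lfloor a\rfloor-1)/a^2$. On each open interval $(n,n+1)$ the factor $\mathbb{P}(D\le n-1)$ is constant and $1/a^2$ is decreasing. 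So the supremum over that piece is approached at its left endpoint, which is either an integer $n\in(a_{\min},a_{\max}]$ or $a_{\min}$ itself when $a_{\min}\in(n,n+1)$. At an integer left endpoint $n$ the one-sided limit from the right is $\lambda\mathbb{P}(D\le n-1)/n^2$, matching the formula evaluated at $a=n$ since $\lfloor n\rfloor=n$. If $a_{\min}$ is itself an integer, the same value arises at $a=a_{\min}$.

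The main care is in this endpoint bookkeeping, i.e.\ confirming that the supremum (not necessarily attained) over the open pieces is exactly the finite maximum over $\mathcal{A}$. The hypothesis then gives $L<1$ and concludes uniqueness. If $u<p$ but $K=1$, then $s$ is constant and the claim is trivial, consistent with the condition.
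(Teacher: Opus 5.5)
Your proposal is correct and follows the paper's proof essentially step for step. For $u\ge p$ you use monotonicity of $g\circ a$; for $u<p$ you bound the Lipschitz constant of $s$ by the piecewise slope of $g$, maximized at the left endpoints of the smooth pieces, and your ``$f$ strictly decreasing'' conclusion is equivalent to the paper's appeal to the contraction principle. One small precision: you only need $\sup|g'|\le\lambda\max_{a\in\mathcal{A}}\mathbb{P}(D\le\lfloor a\rfloor-1)/a^2$, and equality can fail when $a_{\max}$ is an integer exceeding $a_{\min}$, because $a_{\max}\in\mathcal{A}$ then indexes the piece $(a_{\max},a_{\max}+1)$, which lies outside $[a_{\min},a_{\max}]$.
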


\begin{proof}
Write $s(x)=g(a(x))$ with $a(x)=1+(K-1)(p+(u-p)x)$.
If $u\ge p$, then $a(x)$ is nondecreasing in $x$ and $g$ is nonincreasing, hence $s$ is nonincreasing.
Therefore $f(x):=s(x)-x$ is continuous and strictly decreasing on $[0,1]$, so it has at most one root;
existence follows from \Cref{thm:eq_exist}.

If $u<p$, then $a(x)$ is strictly decreasing in $x$ and maps $[0,1]$ onto $[a_{\min},a_{\max}]$.
By \Cref{lem:g_derivative}, $g$ is continuous and piecewise $C^1$ with slope \eqref{eq:gprime}.
On each interval $(n,n+1)$ the magnitude $|g'(a)|=\lambda\,\mathbb{P}(D\le n-1)/a^2$ is decreasing in $a$,
so the maximum slope over $[a_{\min},a_{\max}]$ is attained at the left endpoint of each smooth piece,
captured by the set $\mathcal{A}$. Hence $g$ is Lipschitz on $[a_{\min},a_{\max}]$ with constant
\[
L_g:=\lambda\max_{a\in\mathcal{A}}\frac{\mathbb{P}(D\le \lfloor a\rfloor-1)}{a^2}.
\]
Since $|a'(x)|=(K-1)(p-u)$, the composition $s(x)=g(a(x))$ is Lipschitz on $[0,1]$ with constant
$L=(K-1)(p-u)L_g$.
Under \eqref{eq:exact_unique_condition_re}, $L<1$, so $s$ is a contraction and admits a unique fixed point.
\end{proof}

\begin{figure}[tb]
\centering
\includegraphics[width=0.94\linewidth]{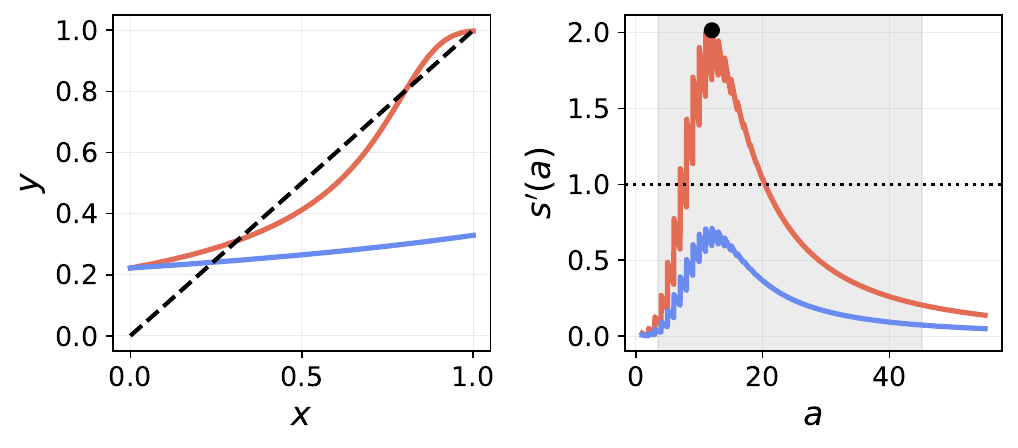}
\caption{Illustration of equilibrium structure. 
Left: the fixed-point map $y=s(x)$ and the diagonal $y=x$ (see \eqref{eq:q_a_s_of_x}). 
For $u=0.6$ the curves intersect once, yielding a unique equilibrium, while for $u=0.05$ multiple intersections appear. 
Right: derivative $s'(a)$ over the interval $[a_{\min},a_{\max}]$. 
When $\sup s'(a)<1$ the contraction condition of \Cref{thm:eq_unique} holds (unique equilibrium), whereas when $\sup s'(a)>1$ the condition fails, allowing multiple equilibria.}
\label{fig:equilibrium_derivative}
\end{figure}

\begin{exmp}[Non-unique equilibria]\label{exmp:nonunique}
Consider $K=50$, $p=0.9$, and $\lambda=10$.  
For $u=0.05$, the uniqueness condition in \Cref{thm:eq_unique} is violated.  
As shown in \Cref{fig:equilibrium_derivative}, the curves $y=s(x)$ and $y=x$ intersect multiple times, implying multiple equilibria.  
In contrast, for $u=0.6$ the condition holds and the curves intersect once, yielding a unique equilibrium.
\end{exmp}

\subsection{Global Convergence for $u\ge p$}

\begin{theorem}[Global convergence for $u\ge p$]\label{thm:global_convergence}
Let $p>0$, $u\in[p,1]$, and $D(t)\overset{\text{i.i.d.}}{\sim}\mathrm{Poisson}(\lambda)$ with $\lambda>0$.
Let $(\bar x(t),\bar n(t))$ evolve according to \Cref{lem:wellposed}, and define
$s(x):=g(1+(K-1)(p+(u-p)x))$.
Then the fixed point $x^\star\in[0,1]$ uniquely exists, satisfying $x^\star=s(x^\star)$, and for every
$\bar x(0)\in[0,1]$, $\bar n(0)>0$,
\[
\bar x(t)\to x^\star,\qquad \bar n(t)\to\infty,\qquad
\gamma(t):=\frac{\bar q(t)}{\bar n(t)+\bar q(t)}\to0.
\]
\end{theorem}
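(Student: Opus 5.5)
Existence and uniqueness of $x^\star$ follow at once from \Cref{thm:eq_exist} and part~1 of \Cref{thm:eq_unique}, since $u\ge p$. It remains to establish the three limits, and we take them in order: first $\bar n$ and $\gamma$, then $\bar x$.

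\textbf{Growth of $\bar n$ and decay of $\gamma$.} For $u\ge p$ we have $\bar q(t)=p+(u-p)\bar x(t)\ge p>0$, using $\bar x(t)\in[0,1]$ from \Cref{lem:wellposed}. The update \eqref{eq:mf_n_update_re} then gives $\bar n(t)\ge \bar n(0)+pt\to\infty$. Since also $\bar q(t)\le 1$, we get
\[
\gamma(t)\le \frac{1}{\bar n(0)+pt}\to 0,
\]
and at the same time
\[
\gamma(t)\ge \frac{p}{\bar n(0)+t+1},
\]
so $\sum_t\gamma(t)=\infty$. These are the usual Robbins--Monro step-size conditions.

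\textbf{Convergence of $\bar x$.} Write $e(t):=\bar x(t)-x^\star$. Because $s$ is nonincreasing on $[0,1]$ (the proof of \Cref{thm:eq_unique}), we have $(s(\bar x)-x^\star)\,e\le 0$. From $\bar x(t+1)=(1-\gamma)\bar x+\gamma s(\bar x)$,
\[
e(t+1)=(1-\gamma(t))\,e(t)+\gamma(t)\bigl(s(\bar x(t))-x^\star\bigr).
\]
The second term has sign opposite to $e(t)$, or is zero, so $|e(t+1)|\le(1-\gamma(t))|e(t)|+\gamma(t)|s(\bar x(t))-x^\star|$ in the overshoot-free case. To make this a contraction I would use the convex-combination form and argue as follows. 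If $e(t)\ge0$, then $s(\bar x(t))\le x^\star\le \bar x(t)$, so $\bar x(t+1)\in[s(\bar x(t)),\bar x(t)]$. Hence
\[
\bar x(t+1)-x^\star\le (1-\gamma(t))\,e(t),
\]
while
\[
\bar x(t+1)-x^\star\ge -\gamma(t)\bigl(x^\star-s(\bar x(t))\bigr)\ge -\gamma(t)\bigl(x^\star-s(1)\bigr)\ge-\gamma(t).
\]
The case $e(t)<0$ is symmetric. This yields
\[
|e(t+1)|\le \max\{(1-\gamma(t))|e(t)|,\ \gamma(t)\}.
\]
Given $\varepsilon>0$, choose $T$ with $\gamma(t)<\varepsilon$ for $t\ge T$. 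After time $T$, the error either stays below $\varepsilon$ once it enters that region, or it is multiplied by $(1-\gamma(t))$ at each step. Since $\prod_t(1-\gamma(t))=0$ by $\sum_t\gamma(t)=\infty$, the error must eventually enter the region. Therefore $\limsup_t|e(t)|\le\varepsilon$, and as $\varepsilon$ is arbitrary, $\bar x(t)\to x^\star$.

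\textbf{Main obstacle.} The key step is the one-sided bound handling overshoot across $x^\star$, because $s$ is only monotone and not contractive. The plan relies on three facts: monotonicity of $s$, the boundedness $s\in[0,1]$, and $\gamma(t)\to0$. These together ensure that any overshoot is of size at most $\gamma(t)$. I would double-check the case $e(t)<0$, where $s(\bar x(t))\ge x^\star$, to confirm that the bound $\bar x(t+1)-x^\star\le\gamma(t)(1-x^\star)$ holds symmetrically.
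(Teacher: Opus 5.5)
Your proof is correct, but your convergence step takes a different route from the paper's. The paper uses the quadratic Lyapunov function $V(t)=(\bar x(t)-x^\star)^2$ and shows $V(t+1)\le V(t)+\gamma(t)^2$. It then invokes $\sum_t\gamma(t)^2<\infty$ to make $V(t)+\sum_{k\ge t}\gamma(k)^2$ monotone, so $V$ converges. Finally it rules out a positive limit by contradiction: continuity and strict monotonicity of $f$ give a uniform drift bound $(x-x^\star)f(x)\le -c$ away from $x^\star$, which is incompatible with $\sum_t\gamma(t)=\infty$.

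You instead exploit the scalar order structure of the convex-combination update. Monotonicity of $s$ places $x^\star$ between $\bar x(t)$ and $s(\bar x(t))$, so each step either contracts the error by $(1-\gamma(t))$ or overshoots by at most $\gamma(t)$. This gives $|e(t+1)|\le\max\{(1-\gamma(t))|e(t)|,\gamma(t)\}$. The case $e(t)<0$ you asked about does hold symmetrically, with $\bar x(t+1)-x^\star\le\gamma(t)(s(0)-x^\star)\le\gamma(t)$.

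Your argument needs only $\gamma(t)\to0$ and $\sum_t\gamma(t)=\infty$. It uses no square-summability, no continuity of $s$ in the convergence step, and no compactness argument to produce $c$. It is also quantitative: iterating yields $|e(t)|\le\max\{\prod_{k=T}^{t-1}(1-\gamma(k))\,|e(T)|,\ \sup_{k\ge T}\gamma(k)\}$. Combined with $p/(\bar n(0)+t+1)\le\gamma(t)\le 1/(\bar n(0)+pt)$, this gives an explicit polynomial bound on $T_\varepsilon(u)$, which the paper's qualitative argument does not.

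The paper's Lyapunov argument has its own advantage: it is the standard stochastic-approximation template. It would carry over more readily to noisy (Robbins--Siegmund-type) versions of the recursion, or to vector-valued states where your one-dimensional sandwich argument is unavailable.
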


\begin{proof}
By \Cref{lem:wellposed}, $\bar x(t)\in[0,1]$ and $\bar n(t)>0$ for all $t$.

For $u\ge p$, the map $q(x)=p+(u-p)x$ is nondecreasing, hence
$a(x):=1+(K-1)q(x)$ is nondecreasing. Since
$s(x)=\mathbb E\!\left[\min\!\left(1,\frac{D}{a(x)}\right)\right]$
is nonincreasing in $a(x)$, it follows that $s$ is nonincreasing in $x$.
Therefore $f(x):=s(x)-x$ is continuous and strictly decreasing on $[0,1]$.
Because $s([0,1])\subset[0,1]$, we have $f(0)=s(0)\ge0$ and $f(1)=s(1)-1\le0$,
so there exists a unique $x^\star\in[0,1]$ such that $f(x^\star)=0$.

Since $\bar x(t)\in[0,1]$, we have $p\le \bar q(t)=p+(u-p)\bar x(t)\le1$.
Hence $\bar n(t+1)=\bar n(t)+\bar q(t)$ implies
$\bar n(0)+tp\le \bar n(t)\le \bar n(0)+t$, so $\bar n(t)\to\infty$.
Therefore $\gamma(t)=\bar q(t)/(\bar n(t)+\bar q(t))\to0$.
Moreover, $\gamma(t)\le1/\bar n(t)\le1/(\bar n(0)+tp)$, so
$\gamma(t)^2\le1/(\bar n(0)+tp)^2$ and therefore $\sum_t\gamma(t)^2<\infty$.
Also $\gamma(t)\ge p/(\bar n(t)+1)\ge p/(\bar n(0)+t+1)$, hence
$\sum_t\gamma(t)=\infty$.

Define $V(t):=(\bar x(t)-x^\star)^2$. Using
$\bar x(t+1)=\bar x(t)+\gamma(t)f(\bar x(t))$, we obtain
\begin{equation}
V(t+1)=V(t)+2\gamma(t)(\bar x(t)-x^\star)f(\bar x(t))+\gamma(t)^2f(\bar x(t))^2 .
\end{equation}
Since $f$ is nonincreasing and $f(x^\star)=0$, we have
$(\bar x(t)-x^\star)f(\bar x(t))\le0$.
Because $s(x),x\in[0,1]$, we also have $|f(x)|\le1$, so
$V(t+1)\le V(t)+\gamma(t)^2$.

Define $W(t):=V(t)+\sum_{k=t}^{\infty}\gamma(k)^2$. Then
$W(t+1)\le W(t)$, so $W(t)$ converges. Since
$\sum_{k=t}^{\infty}\gamma(k)^2\to0$, it follows that $V(t)$ converges
to some $L\ge0$.

Suppose $L>0$. Then for all sufficiently large $t$,
$|\bar x(t)-x^\star|\ge\sqrt L/2$.
Because $f$ is continuous, strictly decreasing, and $f(x^\star)=0$,
there exists $c>0$ such that
$(x-x^\star)f(x)\le -c$ whenever $|x-x^\star|\ge\sqrt L/2$.
Hence for all large $t$,
$V(t+1)\le V(t)-2c\,\gamma(t)+\gamma(t)^2$.
Since $\gamma(t)\to0$, we have $\gamma(t)^2\le c\,\gamma(t)$ for large $t$,
so $V(t+1)\le V(t)-c\,\gamma(t)$.

Summing over $t$ gives a contradiction because $\sum_t\gamma(t)=\infty$
while $V(t)\ge0$. Therefore $L=0$ and $\bar x(t)\to x^\star$.
The properties $\bar n(t)\to\infty$ and $\gamma(t)\to0$ were already shown.
\end{proof}

% \begin{figure}[t]
% \centering
% \includegraphics[width=0.7\linewidth]{mf_vs_micro.pdf}
% \caption{Comparison of the deterministic mean-field trajectory with the microscopic simulation of $K=100$ agents. 
% The microscopic system follows the Beta--Bernoulli update with parameters $p=0.6$, $u=0.9$, demand rate $\lambda=80$, and initial beliefs $(\alpha_0,\beta_0)=(2,2)$. 
% The dashed curve shows the Monte-Carlo average of the microscopic population mean across $40$ simulation runs, while the shaded region represents the $10$--$90$ percentile variability. 
% The close agreement demonstrates that the mean-field recursion accurately captures the aggregate dynamics of the stochastic system.}
% \label{fig:mf_vs_micro}
% \end{figure}

% \begin{exmp}[Accuracy of the mean-field approximation]\label{ex:mf_accuracy}
% We compare the deterministic mean-field recursion \eqref{eq:mf_n_update_re}--\eqref{eq:mf_x_update_re} with the microscopic stochastic dynamics of $K$ agents following the Beta--Bernoulli learning rule with demand $D(t)\sim\mathrm{Poisson}(\lambda)$. 
% The population average $\bar x_K(t)=\frac{1}{K}\sum_{i=1}^K x_i(t)$ from Monte-Carlo simulations is compared with the mean-field trajectory $\bar x(t)$. 
% As shown in \Cref{fig:mf_vs_micro}, the mean-field closely matches the microscopic average with only small finite-population fluctuations.
% \end{exmp}

% ============================================================
\subsection{Control-Dependent Performance Metrics}
% ============================================================
The steady-state adherence $x^\infty(u)$ and throughput $\Gamma^\infty(u)$ are defined in
\eqref{eq:steadystate_defs}.

\begin{lemma}[Existence of steady-state metrics for $u\ge p$]\label{lem:metrics_exist_up}
Assume $u\in[p,1]$ and $D(t)\overset{\text{i.i.d.}}{\sim}\mathrm{Poisson}(\lambda)$.
Then $x^\infty(u)$ and $\Gamma^\infty(u)$ exist and satisfy
\[
x^\infty(u)=x^\star(u),\qquad
\Gamma^\infty(u)=q^\star(u)\,g\!\bigl(1+(K-1)q^\star(u)\bigr),
\]
where $x^\star(u)$ is the unique solution to \eqref{eq:fixed_point} and
$q^\star(u):=p+(u-p)x^\star(u)$.
\end{lemma}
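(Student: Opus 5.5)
The plan is to reduce both limits to \Cref{thm:global_convergence} and then pass to the limit using continuity. Note that the hypothesis $p>0$ used there needs separate treatment when $p=0$.

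\emph{Case $p>0$.} Take $u\in[p,1]$. By \Cref{thm:global_convergence} the fixed point $x^\star(u)$ of \eqref{eq:fixed_point} exists and is unique, and $\bar x(t;u)\to x^\star(u)$ for every admissible initial condition. Hence the limit defining $x^\infty(u)$ in \eqref{eq:steadystate_defs} exists and equals $x^\star(u)$.

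For throughput, observe that $\bar q(t;u)=p+(u-p)\bar x(t;u)$ is an affine, hence continuous, function of $\bar x(t;u)$. Therefore $\bar q(t;u)\to q^\star(u)$. Next, $s(t;u)=g(1+(K-1)\bar q(t;u))$. By \Cref{lem:g_derivative}, $g$ is continuous, and indeed Lipschitz on the compact interval $[1,K]$, which contains every argument $1+(K-1)\bar q$ because $\bar q\in[0,1]$. So $s(t;u)\to g(1+(K-1)q^\star(u))$. The product of two convergent bounded sequences converges to the product of the limits, which gives the stated formula for $\Gamma^\infty(u)$.

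\emph{Case $p=0$.} This is the main obstacle, since \Cref{thm:global_convergence} assumes $p>0$.

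If $u=0$, then $\bar q\equiv0$ and $\bar x(t)\equiv\bar x(0)$. Here $s\equiv g(1)$, so $x^\star=g(1)$ is the unique fixed point, but the trajectory need not converge to it. In this degenerate subcase I would therefore either restrict to $p>0$ as the surrounding theory implicitly does, or note that $\Gamma^\infty=0=q^\star g(\cdot)$ holds regardless.

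If $p=0<u$, then $\bar q(t)=u\bar x(t)$. The only difficulty would be $\bar x(t)\to0$ so fast that $\sum_t\gamma(t)<\infty$. To rule this out, I would note that $s(x)\ge g(K)>0$ on $[0,1]$. Whenever $\bar x(t)<g(K)/2$, the increment $\gamma(t)(s-\bar x)$ is then positive, so $\bar x$ is eventually bounded below by some $\underline{x}>0$ once it starts positive. This gives $\bar q\ge u\underline{x}>0$, and the Robbins--Monro argument in the proof of \Cref{thm:global_convergence} goes through verbatim with $p$ replaced by $u\underline{x}$. The continuity argument above then finishes the proof as before. (If $\bar x(0)=0$ with $p=0$, the state is frozen at $0$ with $\bar q=0$. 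This is the same degeneracy as $u=0$, and I would exclude it explicitly.)
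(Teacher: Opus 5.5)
For $p>0$ your proof is the paper's proof: invoke \Cref{thm:global_convergence} to get $\bar x(t;u)\to x^\star(u)$, pass to the limit through the affine map $x\mapsto p+(u-p)x$ and the continuous $g$, and multiply the limits.

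Your treatment of $p=0$ goes beyond the paper. The paper's one-line proof silently inherits the hypothesis $p>0$ from \Cref{thm:global_convergence}, even though the lemma only assumes $u\in[p,1]$. You are right that the statement is then false in degenerate subcases:
\begin{itemize}
\item If $p=u=0$, then $\bar x(t)\equiv\bar x(0)$, which need not equal $g(1)$.
\item If $p=0$ and $\bar x(0)=0$, the state freezes at $0$ although $s(0)=g(1)=1-e^{-\lambda}>0$. Here even the $\Gamma^\infty$ formula fails, since $\Gamma^\infty=0<u\,(x^\star)^2$.
\end{itemize}

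Your extension to $p=0<u$ with $\bar x(0)>0$ is valid, but tighten the lower bound. The observation that ``the increment is positive below $g(K)/2$'' does not by itself rule out a large downward jump from above that threshold. Use instead the convex-combination form from \Cref{lem:wellposed}: since $\gamma(t)\in[0,1)$ and $s(t)\ge g(K)>0$, you get $\bar x(t+1)\ge\min\{\bar x(t),g(K)\}$. Hence $\bar x(t)\ge\min\{\bar x(0),g(K)\}=:\underline x$ for all $t$, not merely eventually. This yields $\bar q(t)\ge u\,\underline x$, and with it both step-size conditions $\sum_t\gamma(t)^2<\infty$ and $\sum_t\gamma(t)=\infty$ needed to rerun the Lyapunov argument.
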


\begin{proof}
By \Cref{thm:global_convergence}, $\bar x(t;u)\to x^\star(u)$.
Thus $\bar q(t;u)\to q^\star(u)$ and continuity of $g$ gives
$s(t;u)=g(1+(K-1)\bar q(t;u))\to g(1+(K-1)q^\star(u))$.
\end{proof}

\begin{lemma}[Monotone steady-state adherence for $u\ge p$]\label{lem:x_monotone}
Assume $u\in[p,1]$ and $D(t)\overset{\text{i.i.d.}}{\sim}\mathrm{Poisson}(\lambda)$.
Then $u\mapsto x^\infty(u)$ is nonincreasing on $[p,1]$.
\end{lemma}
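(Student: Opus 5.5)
By \Cref{lem:metrics_exist_up}, $x^\infty(u)=x^\star(u)$, the unique root on $[0,1]$ of
\[
F(x,u):=g\bigl(1+(K-1)(p+(u-p)x)\bigr)-x .
\]
The plan is therefore a monotone comparative-statics argument on this fixed-point equation. No implicit function theorem is needed, which is convenient because $g$ is only piecewise $C^1$ by \Cref{lem:g_derivative}.

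\textbf{Monotonicity of $F$ in each argument.} For fixed $x\in[0,1]$, the map $u\mapsto a=1+(K-1)(p+(u-p)x)$ is nondecreasing, since $x\ge0$. Because $g$ is nonincreasing (\Cref{lem:g_derivative}), $u\mapsto F(x,u)$ is nonincreasing. For fixed $u\ge p$, the map $x\mapsto F(x,u)$ is continuous and strictly decreasing, as shown in the proof of \Cref{thm:global_convergence}. The $g$-part is nonincreasing in $x$ and the term $-x$ is strictly decreasing.

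\textbf{Comparison step.} Take $p\le u_1<u_2\le 1$ and write $x_i:=x^\star(u_i)$. Suppose, for contradiction, that $x_2>x_1$. Then
\[
0=F(x_2,u_2)\le F(x_2,u_1)<F(x_1,u_1)=0 .
\]
The first inequality uses monotonicity in $u$. The second uses strict monotonicity in $x$ together with $x_2>x_1$. This is a contradiction, so $x_2\le x_1$, and hence $u\mapsto x^\infty(u)$ is nonincreasing on $[p,1]$.

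The argument has no real obstacle. The only point needing care is the case $x=0$, where $F(\cdot,u)$ does not depend on $u$. This is harmless because only weak monotonicity in $u$ is used. If $p=0$, \Cref{thm:global_convergence} does not apply, but the fixed point is still unique by \Cref{thm:eq_unique}. It would remain to confirm that $x^\infty(u)$ equals $x^\star(u)$ in that edge case, which I would check separately or treat as covered by the standing assumption $p>0$.
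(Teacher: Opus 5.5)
Your proposal is correct and matches the paper's proof: both use that $F(x,u)=g(1+(K-1)(p+(u-p)x))-x$ is nonincreasing in $u$ and strictly decreasing in $x$, so $F(\cdot,u_2)\le F(\cdot,u_1)$ forces $x^\star(u_2)\le x^\star(u_1)$, and your contradiction step just spells out the paper's one-line comparison. Your caveat about $p=0$ is justified: the identification $x^\infty(u)=x^\star(u)$ rests on \Cref{thm:global_convergence}, which assumes $p>0$, and at $p=u=0$ the trajectory stays at $\bar x(0)$ rather than reaching $x^\star(0)$, so the lemma should be read with $p>0$ as a standing assumption.
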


\begin{proof}
For fixed $x$, the map $u\mapsto q(x)=p+(u-p)x$ is nondecreasing.
Since $g$ is nonincreasing, $u\mapsto s(x,u):=g(1+(K-1)q(x))$ is nonincreasing.
For each $u\ge p$, $x\mapsto s(x,u)$ is nonincreasing, hence $F(x,u):=s(x,u)-x$ is strictly decreasing in $x$
and has a unique zero $x^\star(u)$.
If $u_1\le u_2$, then $F(\cdot,u_2)\le F(\cdot,u_1)$ pointwise, so the unique root satisfies
$x^\star(u_2)\le x^\star(u_1)$.
\end{proof}

\begin{theorem}[Local throughput gain implies distinct optima]\label{thm:local_gain_no_coincide}
Assume $u\in[p,1]$ and $D\sim\mathrm{Poisson}(\lambda)$.
Let $a_p:=1+(K-1)p$ and $x_p:=g(a_p)$. Let $x^\star(u)$ solve \eqref{eq:fixed_point}.
Then, at equilibrium,
\[
\Gamma^\infty(u)=q^\star(u)\,x^\star(u)=\bigl(p+(u-p)x^\star(u)\bigr)x^\star(u).
\]
If $g$ is differentiable at $a_p$ (e.g., $a_p\notin\mathbb{Z}$), then $\Gamma^\infty$ is differentiable at $u=p$ and
\begin{equation}\label{eq:Gamma_prime_p_thm_re}
\Gamma^{\infty\,\prime}(p)=x_p^2+p(K-1)x_p\,g'(a_p).
\end{equation}
If $\Gamma^{\infty\,\prime}(p)>0$, then $\exists\,\varepsilon>0$ such that
$\Gamma^\infty(u)>\Gamma^\infty(p)$ for all $u\in(p,p+\varepsilon)$; combined with \Cref{lem:x_monotone},
no constant control simultaneously maximizes $x^\infty(u)$ and $\Gamma^\infty(u)$ on $[p,1]$.
\end{theorem}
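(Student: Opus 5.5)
Equation \eqref{eq:Gamma_prime_p_thm_re} says $\Gamma^{\infty\,\prime}(p)=x_p^2+p(K-1)x_p\,g'(a_p)$. The plan is to derive this from the closed-form equilibrium expression of \Cref{lem:metrics_exist_up}, differentiate it at $u=p$, and then combine the sign condition with \Cref{lem:x_monotone}.

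\textbf{Step 1: equilibrium identity.} By \Cref{lem:metrics_exist_up}, $\Gamma^\infty(u)=q^\star(u)\,g(1+(K-1)q^\star(u))$. The fixed-point relation \eqref{eq:fixed_point} gives $g(1+(K-1)q^\star(u))=s(x^\star(u))=x^\star(u)$. Substituting yields $\Gamma^\infty(u)=q^\star(u)x^\star(u)=(p+(u-p)x^\star(u))x^\star(u)$.

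\textbf{Step 2: the point $u=p$.} At $u=p$ the participation $q(x)\equiv p$ does not depend on $x$. Hence $x^\star(p)=g(a_p)=x_p$ and $\Gamma^\infty(p)=p\,x_p$.

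\textbf{Step 3: differentiability of $x^\star$ at $p$.} This is the main obstacle, since $g$ is only assumed differentiable at $a_p$, which rules out a global implicit function theorem. I would work directly with the difference quotient. Set $h=u-p>0$ and $\delta(h)=x^\star(p+h)-x_p$. The fixed point satisfies
\[
x_p+\delta(h)=g\bigl(a_p+(K-1)h\,x^\star(p+h)\bigr).
\]
The argument of $g$ is $a_p+(K-1)h\,x^\star(p+h)$, and it tends to $a_p$ because $x^\star\in[0,1]$. Continuity of $g$ then gives $\delta(h)\to0$. Differentiability of $g$ at $a_p$ lets us write
\[
\delta(h)=g'(a_p)(K-1)h\,x^\star(p+h)+o(h).
\]
Dividing by $h$ and using $x^\star(p+h)\to x_p$ gives $\delta(h)/h\to (K-1)x_p\,g'(a_p)$. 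So $x^\star$ is right-differentiable at $p$ with $x^{\star\prime}(p)=(K-1)x_p g'(a_p)$. Only the one-sided derivative is meaningful, because the domain is $[p,1]$. The same expansion also works for $h<0$ if $x^\star$ is defined there, but that case is not needed.

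\textbf{Step 4: product rule.} Write $\Gamma^\infty(u)=q^\star(u)x^\star(u)$. From the definition, $q^{\star\prime}(p)=x^\star(p)+(u-p)x^{\star\prime}\big|_{u=p}=x_p$. The product rule then gives
\[
\Gamma^{\infty\,\prime}(p)=x_p\cdot x_p+p\cdot(K-1)x_p\,g'(a_p),
\]
which is \eqref{eq:Gamma_prime_p_thm_re}.

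\textbf{Step 5: the trade-off.} If $\Gamma^{\infty\,\prime}(p)>0$, the definition of the (right) derivative gives $\varepsilon>0$ with $\Gamma^\infty(u)-\Gamma^\infty(p)>0$ for $u\in(p,p+\varepsilon)$. Hence $u=p$ does not maximize throughput on $[p,1]$.

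By \Cref{lem:x_monotone}, $x^\infty$ is nonincreasing on $[p,1]$, so $u=p$ maximizes adherence. Suppose some $\hat u$ maximized both. Then $x^\infty(\hat u)=x^\infty(p)$, so $\hat u$ is an adherence maximizer, and $\Gamma^\infty(\hat u)\ge\Gamma^\infty(u)>\Gamma^\infty(p)$ for $u$ near $p$, so $\hat u\neq p$.

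This does not immediately contradict monotonicity, because $x^\infty$ could be flat on $[p,\hat u]$. The flat case needs its own argument. If $x^\star(u)=x_p$ on $[p,\hat u]$, the fixed-point equation forces $g$ to be constant on $[a_p,a_p+(K-1)(\hat u-p)x_p]$. Two sub-cases then close the argument:
\begin{itemize}
\item If $x_p>0$, then $g$ is constant on that interval, but $g$ is strictly decreasing wherever $g'\neq0$ by \Cref{lem:g_derivative}. Since $\mathbb P(D\le n)>0$ for Poisson $D$, $g'<0$ on every non-integer $a>1$, so $g$ cannot be constant there.
\item If $x_p=0$, then $\Gamma^{\infty\,\prime}(p)=0$, contradicting the hypothesis $\Gamma^{\infty\,\prime}(p)>0$.
\end{itemize}
Thus $u\mapsto x^\infty(u)$ strictly decreases near $p$, and no constant control simultaneously maximizes $x^\infty(u)$ and $\Gamma^\infty(u)$ on $[p,1]$.
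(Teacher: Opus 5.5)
Your proof is correct and follows the same route as the paper: evaluate the fixed point at $u=p$, differentiate the fixed-point equation there to get $x^{\star\prime}(p)=(K-1)x_p\,g'(a_p)$, and apply the product rule to $\Gamma^\infty=q^\star x^\star$. It is more complete than the paper's two-line argument in two places. Your difference-quotient step justifies the differentiability of $x^\star$ at $p$, which the paper takes for granted. Your flat-case argument closes a gap that the paper's bare appeal to monotonicity of $x^\infty$ leaves open: it uses $g'<0$ at non-integer $a>1$, and the interval is nondegenerate because differentiability of $g$ at $a_p$ forces $a_p>1$, hence $K\ge 2$, a point worth stating explicitly.
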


\begin{proof}
At $u=p$, $x^\star(p)=x_p$.
Differentiating $x=g(1+(K-1)(p+(u-p)x))$ at $u=p$ gives
$x^{\star\prime}(p)=(K-1)x_p g'(a_p)$.
Differentiating $\Gamma^\infty(u)=p x^\star(u)+(u-p)(x^\star(u))^2$ at $u=p$ yields \eqref{eq:Gamma_prime_p_thm_re}.
\end{proof}

% ============================================================
\subsection{Geometry of the Steady-State Performance Frontier}
% ============================================================
\begin{theorem}[Frontier geometry]\label{thm:frontier_geometry}
Assume $D(t)\overset{\text{i.i.d.}}{\sim}\mathrm{Poisson}(\lambda)$, $p\in(0,1]$, and $u\in[p,1]$.
Let $x^\infty(u)=x^\star(u)$ denote the unique solution to \eqref{eq:fixed_point}, and define
$q^\star(u):=p+(u-p)x^\star(u)$, $a^\star(u):=1+(K-1)q^\star(u)$, and
$\Gamma^\infty(u):=q^\star(u)g(a^\star(u))$. Then:
\begin{enumerate}
\item $u\mapsto x^\infty(u)$ and $u\mapsto \Gamma^\infty(u)$ are continuous on $[p,1]$.

\item $u\mapsto x^\infty(u)$ is nonincreasing on $[p,1]$.

\item If $a^\star(u)\notin\mathbb{Z}$ for all $u\in[p,1]$ and
\begin{equation}\label{eq:throughput_mono_condition_re}
\begin{aligned}
&(K-1)\sup_{u\in[p,1]}|g'(a^\star(u))|
\sup_{u\in[p,1]}\bigl(p+2(u-p)x^\infty(u)\bigr)\\
&<
\inf_{u\in[p,1]} x^\infty(u),
\end{aligned}
\end{equation}
then $u\mapsto \Gamma^\infty(u)$ is strictly increasing on $[p,1]$, so the frontier is strictly decreasing
in the $(x,\Gamma)$-plane.
\end{enumerate}
\end{theorem}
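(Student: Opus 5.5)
Part 1 (continuity) follows from a parametric fixed-point argument. Set $F(x,u):=g\bigl(1+(K-1)(p+(u-p)x)\bigr)-x$ on $[0,1]\times[p,1]$. By \Cref{lem:g_derivative}, $g$ is continuous, so $F$ is jointly continuous. For each $u$, $F(\cdot,u)$ is strictly decreasing with unique root $x^\star(u)$ (\Cref{thm:global_convergence}). Take $u_k\to u$. By compactness of $[0,1]$, every subsequence of $x^\star(u_k)$ has a further convergent subsequence with some limit $\tilde x$. Joint continuity gives $F(\tilde x,u)=0$, and uniqueness then gives $\tilde x=x^\star(u)$. Hence $x^\star$ is continuous. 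Continuity of $q^\star(u)=p+(u-p)x^\star(u)$ and of $a^\star$ follows, and composing with the continuous $g$ gives continuity of $\Gamma^\infty$.

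Part 2 is exactly \Cref{lem:x_monotone}, together with the identification $x^\infty=x^\star$ from \Cref{lem:metrics_exist_up}.

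Part 3 is the substantive step. At equilibrium $g(a^\star(u))=x^\star(u)$, so
\[
\Gamma^\infty(u)=q^\star(u)\,x^\star(u)=p\,x^\star(u)+(u-p)\,x^\star(u)^2 .
\]
Because $a^\star(u)\notin\mathbb{Z}$ for all $u$, $g$ is $C^1$ near each $a^\star(u)$ with derivative given by \eqref{eq:gprime}. Apply the implicit function theorem to $F(x,u)=0$. We have
\[
\partial_x F=(K-1)(u-p)g'(a^\star)-1\le -1<0,
\]
so $x^\star$ is differentiable, with
\[
x^{\star\prime}(u)=\frac{(K-1)x^\star g'(a^\star)}{1-(K-1)(u-p)g'(a^\star)}.
\]
Since $g'\le0$, the denominator is at least $1$. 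Therefore $x^{\star\prime}\le0$ and $|x^{\star\prime}(u)|\le (K-1)x^\star|g'(a^\star)|$. Differentiating $\Gamma^\infty$ gives
\[
\Gamma^{\infty\prime}(u)=x^{\star2}+\bigl(p+2(u-p)x^\star\bigr)x^{\star\prime}
\ge x^\star\Bigl[x^\star-(K-1)|g'(a^\star)|\bigl(p+2(u-p)x^\star\bigr)\Bigr].
\]
Bounding each factor by its supremum or infimum, condition \eqref{eq:throughput_mono_condition_re} makes the bracket strictly positive. The same condition forces $\inf x^\infty>0$, so $x^\star>0$. Hence $\Gamma^{\infty\prime}>0$ on $[p,1]$, and $\Gamma^\infty$ is strictly increasing by the mean value theorem.

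For the frontier statement, $x^\infty$ is nonincreasing while $\Gamma^\infty$ is strictly increasing. The curve $u\mapsto(x^\infty(u),\Gamma^\infty(u))$ therefore has $\Gamma$ strictly increasing as $x$ decreases, i.e.\ a strictly decreasing frontier. Where $x^\infty$ is locally constant, the curve is vertical; I would note this degenerate case, and the frontier is still a strictly decreasing trade-off curve.

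I expect the main obstacle to be justifying differentiability via the implicit function theorem, since $g$ is only piecewise $C^1$. The non-integrality hypothesis is what handles this: by continuity of $a^\star$, a neighbourhood of each $u$ keeps $a^\star$ inside a single open interval $(n,n+1)$, where $g$ is smooth. One subtlety is that the hypothesis only says each $a^\star(u)$ is non-integer pointwise. Since $a^\star$ is continuous on the connected set $[p,1]$ and avoids the integers, its range lies in one interval $(n,n+1)$. So $g$ is $C^1$ along the whole path, and the sup of $|g'|$ in the condition is finite.
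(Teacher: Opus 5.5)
Your proposal is correct and follows the paper's proof. Continuity comes from joint continuity of the fixed-point map plus uniqueness, part 2 is \Cref{lem:x_monotone}, and part 3 writes $\Gamma^\infty=p\,x^\star+(u-p)(x^\star)^2$ and inserts the implicit-differentiation bound $|x^{\star\prime}(u)|\le (K-1)x^\star|g'(a^\star)|$. Your subsequence argument, explicit formula for $x^{\star\prime}$, and connectedness observation (which makes every $u\in[p,1]$ a differentiability point) supply details the paper only asserts.

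The vertical-segment caveat you raise cannot occur. Under the hypotheses of part 3 we have $K\ge 2$ and $a^\star(u)>1$ non-integer. Hence $g'(a^\star)\le -\lambda e^{-\lambda}/(a^\star)^2<0$, so $x^{\star\prime}(u)<0$ and $x^\infty$ is in fact strictly decreasing.
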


\begin{proof}
Since $g$ is continuous and nonincreasing, the map 
$s(x,u)=g(1+(K-1)(p+(u-p)x))$ is continuous in $(x,u)$. 
Because the equilibrium $x^\infty(u)$ solving $x=s(x,u)$ is unique for $u\ge p$, 
it depends continuously on $u$. Since 
$\Gamma^\infty(u)=p x^\infty(u)+(u-p)(x^\infty(u))^2$, 
it follows that both $u\mapsto x^\infty(u)$ and $u\mapsto \Gamma^\infty(u)$ are continuous on $[p,1]$. 
Monotonicity of $x^\infty(u)$ follows from Lemma~\ref{lem:x_monotone}.

For monotone throughput, note $\Gamma^\infty(u)=q^\star(u)x^\infty(u)=p x^\infty(u)+(u-p)(x^\infty(u))^2$.
At differentiability points,
\[
\Gamma^{\infty\,\prime}(u)=(x^\infty(u))^2+x^{\infty\,\prime}(u)\bigl(p+2(u-p)x^\infty(u)\bigr).
\]
Since $x^{\infty\,\prime}(u)\le 0$, we bound the negative term using 
$|x^{\infty\,\prime}(u)|\le (K-1)x^\infty(u)|g'(a^\star(u))|$, 
a conservative estimate obtained from implicit differentiation of \eqref{eq:fixed_point}. 
Substituting yields 
$\Gamma^{\infty\,\prime}(u)\ge x^\infty(u)\!\left[x^\infty(u)-(K-1)|g'(a^\star(u))|\bigl(p+2(u-p)x^\infty(u)\bigr)\right]$, 
which is positive under \eqref{eq:throughput_mono_condition_re}.
\end{proof}

% ============================================================
\subsection{Optimal Constant Control Under an Adherence Constraint}
% ============================================================
We restrict to constant controls $u\in[p,1]$, under which the mean-field steady state is (unique and) well-defined by the fixed point $x^\infty(u)=x^\star(u)$ (cf. \Cref{lem:metrics_exist_up}).

\begin{problem}[Throughput maximization with adherence floor]\label{prob:const_opt}
Fix $\underline x\in(0,1)$ and choose $u\in[p,1]$ to
\[
\max\ \Gamma^\infty(u)
\qquad \text{s.t.}\qquad x^\infty(u)\ge \underline x.
\]
\end{problem}

\begin{theorem}[Solution via the frontier]\label{thm:const_opt_solution}
Assume \Cref{lem:x_monotone},\Cref{thm:frontier_geometry} and \eqref{eq:throughput_mono_condition_re},
so that $x^\infty(\cdot)$ is continuous and nonincreasing and $\Gamma^\infty(\cdot)$ is continuous and strictly increasing on $[p,1]$.
Define the feasible set $\mathcal{U}(\underline x):=\{u\in[p,1]: x^\infty(u)\ge \underline x\}$ and
\[
u_{\max}(\underline x):=\sup \mathcal{U}(\underline x).
\]
Then:
\begin{enumerate}
\item If $\mathcal{U}(\underline x)=\emptyset$ (equivalently $\underline x> x^\infty(p)$), then \Cref{prob:const_opt} is infeasible.
\item If $\mathcal{U}(\underline x)\neq\emptyset$, then $\mathcal{U}(\underline x)$ is a closed interval of the form $[p,u_{\max}(\underline x)]$ and every optimizer of \Cref{prob:const_opt} equals
\[
u^\star(\underline x)=u_{\max}(\underline x).
\]
\item (Constraint activity.) If $u^\star(\underline x)<1$ (equivalently $\underline x> x^\infty(1)$), then the adherence constraint is active at the optimum:
\[
x^\infty\bigl(u^\star(\underline x)\bigr)=\underline x.
\]
If instead $\underline x\le x^\infty(1)$, then $u^\star(\underline x)=1$ and the constraint may be slack (it is active only if $\underline x=x^\infty(1)$).
\end{enumerate}
\end{theorem}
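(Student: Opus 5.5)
The argument uses only the two structural properties granted in the hypotheses: $x^\infty(\cdot)$ is continuous and nonincreasing on $[p,1]$ (\Cref{lem:x_monotone} together with part~1 of \Cref{thm:frontier_geometry}), and $\Gamma^\infty(\cdot)$ is continuous and strictly increasing on $[p,1]$ (part~3 of \Cref{thm:frontier_geometry} under \eqref{eq:throughput_mono_condition_re}). No further analysis of the fixed-point equation \eqref{eq:fixed_point} is needed; the proof is order-theoretic on the interval $[p,1]$.

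First, I describe the feasible set. Because $x^\infty$ is nonincreasing, $\mathcal{U}(\underline x)$ is downward closed in $[p,1]$: if $u\in\mathcal{U}(\underline x)$ and $p\le v\le u$, then $x^\infty(v)\ge x^\infty(u)\ge\underline x$. Because $x^\infty$ is continuous, $\mathcal{U}(\underline x)$ is the preimage of the closed set $[\underline x,\infty)$ and is therefore closed. Consequently $\mathcal{U}(\underline x)$ is nonempty if and only if $p\in\mathcal{U}(\underline x)$, that is, if and only if $x^\infty(p)\ge\underline x$. This gives part~1. In the nonempty case, a downward-closed, closed, nonempty subset of $[p,1]$ containing $p$ equals $[p,u_{\max}(\underline x)]$, and the supremum $u_{\max}(\underline x)$ is attained by closedness.

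Second, I identify the optimizer. Since $\Gamma^\infty$ is strictly increasing, for every $u\in[p,u_{\max})$ we have $\Gamma^\infty(u)<\Gamma^\infty(u_{\max})$. Hence $u_{\max}$ is the unique maximizer over the feasible interval, which proves part~2.

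Third, I establish constraint activity. Suppose $u^\star=u_{\max}<1$, and assume for contradiction that $x^\infty(u^\star)>\underline x$. By continuity there is $\delta>0$ such that $x^\infty(u)>\underline x$ for all $u\in[u^\star,u^\star+\delta]\subset[p,1]$. This contradicts the maximality of $u_{\max}$, so $x^\infty(u^\star)=\underline x$.

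For the stated equivalence, note that $u^\star=1$ holds exactly when $1\in\mathcal{U}(\underline x)$, i.e.\ when $\underline x\le x^\infty(1)$. Thus $u^\star<1$ is equivalent to $\underline x>x^\infty(1)$, given feasibility. When $\underline x\le x^\infty(1)$, the constraint value at the optimum is $x^\infty(1)\ge\underline x$, with equality exactly when $\underline x=x^\infty(1)$.

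I expect the only subtle point to be the closedness of $\mathcal{U}(\underline x)$ and the attainment of the supremum. Both rest entirely on the continuity of $x^\infty$ from part~1 of \Cref{thm:frontier_geometry}, so I will cite that continuity explicitly rather than rely on monotonicity alone. A minor side check is that the statement's claim ``$u^\star<1$ iff $\underline x>x^\infty(1)$'' is meant under feasibility; I will state that qualifier.
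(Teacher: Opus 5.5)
Your proposal is correct and follows essentially the same route as the paper: monotonicity of $x^\infty$ makes $\mathcal{U}(\underline x)$ downward closed, continuity makes it closed so that $u_{\max}(\underline x)$ is attained, strict monotonicity of $\Gamma^\infty$ selects the right endpoint, and a continuity-based contradiction gives constraint activity. Your remark that the equivalence $u^\star(\underline x)<1 \iff \underline x> x^\infty(1)$ should be read under feasibility is a small clarification that the paper leaves implicit.
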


\begin{proof}
By monotonicity of $x^\infty$, the feasible set $\mathcal{U}(\underline x)$ is either empty or an interval of the form $[p,\sup\mathcal{U}(\underline x)]$.
By continuity of $x^\infty$, $\mathcal{U}(\underline x)$ is closed, hence the supremum is attained and
$\mathcal{U}(\underline x)=[p,u_{\max}(\underline x)]$.

Since $\Gamma^\infty$ is strictly increasing, the optimal value over a nonempty interval is attained uniquely at its right endpoint $u_{\max}(\underline x)$.

For constraint activity, suppose $u^\star(\underline x)<1$ and $x^\infty(u^\star(\underline x))>\underline x$.
By continuity, there exists $\eta>0$ such that $x^\infty(u)>\underline x$ for all $u\in[u^\star(\underline x),u^\star(\underline x)+\eta]$,
contradicting the definition of $u^\star(\underline x)$ as the maximal feasible control. Hence $x^\infty(u^\star(\underline x))=\underline x$.
If $\underline x\le x^\infty(1)$ then $u=1$ is feasible, so $u^\star(\underline x)=1$.
\end{proof}

% ============================================================
\subsection{Fast Algorithm for Optimal Constant Control}
% ============================================================
We compute $u^\star(\underline x)$ by bisection, exploiting that $u\mapsto x^\infty(u)$ is nonincreasing on $[p,1]$
(\Cref{lem:x_monotone}). Each feasibility check is performed by solving the equilibrium condition
$x^\infty(u)=x^\star(u)$ (instead of simulating the transient recursion).

\paragraph{Closed-form evaluation of $g(a)$.}
Let $D\sim\mathrm{Poisson}(\lambda)$ and define
\[
g(a):=\mathbb{E}\!\left[\min\!\left(1,\frac{D}{a}\right)\right],\qquad a>0.
\]
Let $k_0:=\lceil a\rceil$ and $F(k):=\mathbb{P}(D\le k)$, with the convention $F(k)=0$ for $k<0$.
Then
\begin{equation}\label{eq:g_fast}
\begin{aligned}
g(a)
&=\mathbb{P}(D\ge k_0)+\frac{1}{a}\mathbb{E}\!\left[D\,\mathbf{1}_{\{D\le k_0-1\}}\right]\\
&=1-F(k_0-1)+\frac{\lambda}{a}F(k_0-2),
\end{aligned}
\end{equation}
using $\mathbb{E}[D\,\mathbf{1}_{\{D\le m\}}]=\lambda\,\mathbb{P}(D\le m-1)$ for Poisson $D$.
Since in our model $a=1+(K-1)\bar q\in[1,K]$, pre-computing $F(k)$ for $k=0,1,\dots,K$ makes the time complexity of each $g(a)$ evaluation $O(1)$.

\paragraph{Equilibrium solver.}
For $u\in[p,1]$, the steady state $x^\infty(u)=x^\star(u)$ is the unique root of
\[
\Phi(x;u):=g\!\bigl(1+(K-1)(p+(u-p)x)\bigr)-x=0,\qquad x\in[0,1],
\]
since $x\mapsto \Phi(x;u)$ is continuous and strictly decreasing on $[0,1]$ (cf.\ \Cref{thm:eq_unique} and the monotonicity arguments in \Cref{thm:global_convergence}).
Hence $x^\star(u)$ can be computed by bisection in $x$.

\begin{algorithm}[t]
\caption{Bisection for $u^\star(\underline x)$ via equilibrium solves}\label{alg:fast_u_star}
\begin{algorithmic}[1]
\REQUIRE $K,p,\lambda$, adherence floor $\underline x\in(0,1)$, tolerances $\delta_u,\delta_x$.
\STATE Precompute $F(k)=\mathbb{P}(D\le k)$ for $k=0,1,\dots,K$; evaluate $g(\cdot)$ via \eqref{eq:g_fast}.
\STATE \textbf{Function} $\textsc{XStar}(u)$: bisection on $x\in[0,1]$ to find the unique root of
$\Phi(x;u)=g(1+(K-1)(p+(u-p)x))-x$ up to tolerance $\delta_x$; return $x^\star(u)$.
\STATE \textbf{Function} $\textsc{Feasible}(u)$: return TRUE iff $\textsc{XStar}(u)\ge \underline x$.
\STATE Initialize $\ell\leftarrow p$, $h\leftarrow 1$.
\IF{$\textsc{Feasible}(\ell)$ is FALSE} \STATE \textbf{return} INFEASIBLE. \ENDIF
\IF{$\textsc{Feasible}(h)$ is TRUE} \STATE \textbf{return} $u^\star(\underline x)\leftarrow 1$. \ENDIF
\WHILE{$h-\ell>\delta_u$}
\STATE $m\leftarrow(\ell+h)/2$.
\IF{$\textsc{Feasible}(m)$}
\STATE $\ell\leftarrow m$ \COMMENT{push right: seek maximal feasible $u$}
\ELSE
\STATE $h\leftarrow m$
\ENDIF
\ENDWHILE
\STATE \textbf{return} $u^\star(\underline x)\leftarrow \ell$.
\end{algorithmic}
\end{algorithm}

\paragraph{Throughput at the optimum.}
Once $x^\star(u)$ is computed, the steady-state throughput follows from
\[
\Gamma^\infty(u)=q^\star(u)\,x^\star(u)=\bigl(p+(u-p)x^\star(u)\bigr)x^\star(u),
\]
since at equilibrium $x^\star(u)=s(x^\star(u))$.

\section{Numerical Simulation}\label{sec:numerical}

We illustrate the theoretical results using the mean-field recursion 
\eqref{eq:mf_n_update_re}--\eqref{eq:mf_x_update_re} with parameters
$K=100$, $p=0.3$, $\lambda=50$, $\bar x(0)=0.25$, and $\bar n(0)=4$.

\subsection{Convergence trajectories}

Figure~\ref{fig:conv_traj} shows the trajectories $\bar x(t)$ for several control values $u$. 
For each control the system converges to a steady-state adherence level, while larger $u$ leads to lower equilibrium adherence. 
This behavior is consistent with \Cref{thm:global_convergence,lem:x_monotone}.

\begin{figure}[h]
\centering
\includegraphics[width=0.65\linewidth]{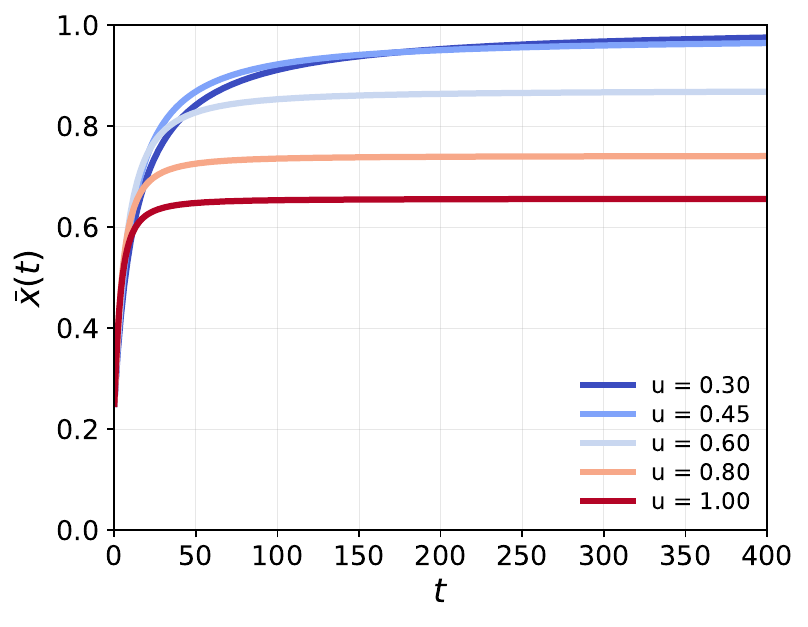}
\caption{Convergence trajectories of the adherence state $\bar x(t)$ for different constant controls $u$. 
Each trajectory converges to a control-dependent equilibrium.}
\label{fig:conv_traj}
\end{figure}

\subsection{Convergence behavior}

To illustrate the convergence rate of the mean-field recursion, we examine the error
\(e(t)=\bar x(t)-x^\star(u)\), where \(x^\star(u)\) denotes the equilibrium adherence level corresponding to control \(u\). Figure~\ref{fig:error_decay} shows the decay of \(|e(t)|\) on a log--log scale for several control values \(u\). 
The approximately linear decay on the log--log plot suggests a polynomial-type convergence behavior, which is consistent with the diminishing-step structure of the recursion and with the convergence established in \Cref{thm:global_convergence}.
The dashed line shows a reference curve of order \(C/t\).

\begin{figure}[h]
\centering
\includegraphics[width=0.65\linewidth]{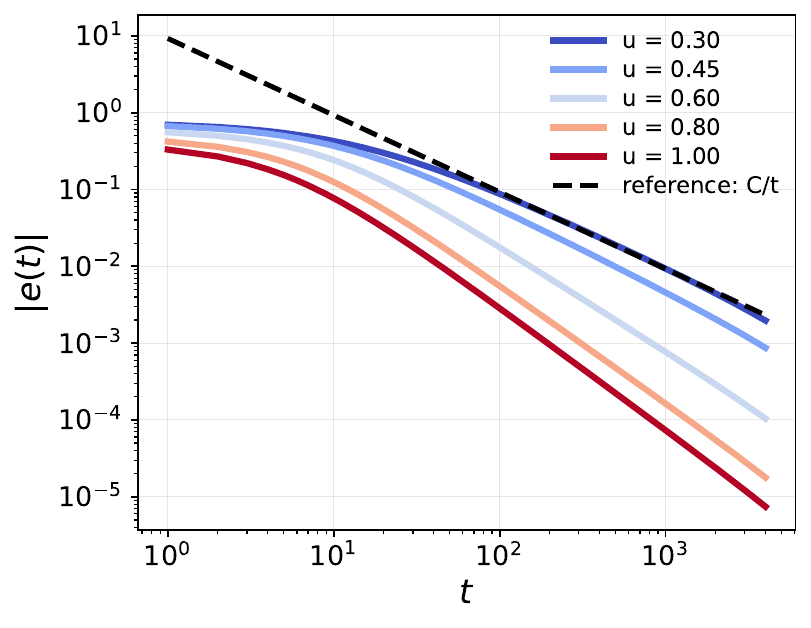}
\caption{Log--log decay of the adherence error \(|e(t)|\) for different controls $u$. 
The dashed line shows a reference curve of order $C/t$.}
\label{fig:error_decay}
\end{figure}

\subsection{Steady-state frontier}

For controls $u\in\{0.3,0.35,\dots,1\}$, the mean-field recursion is run for $T=1000$ steps and steady-state values are approximated by averaging the final $200$ samples. 
Figure~\ref{fig:frontier} plots the pairs $\bigl(x^\infty(u),\Gamma^\infty(u)\bigr)$, illustrating the adherence--throughput trade-off: larger $u$ increases throughput but reduces equilibrium adherence. 
With an adherence constraint $\underline x=0.9$, the optimal control is the largest feasible $u$, as predicted by \Cref{thm:const_opt_solution}.

\begin{figure}[h]
\centering
\includegraphics[width=0.65\linewidth]{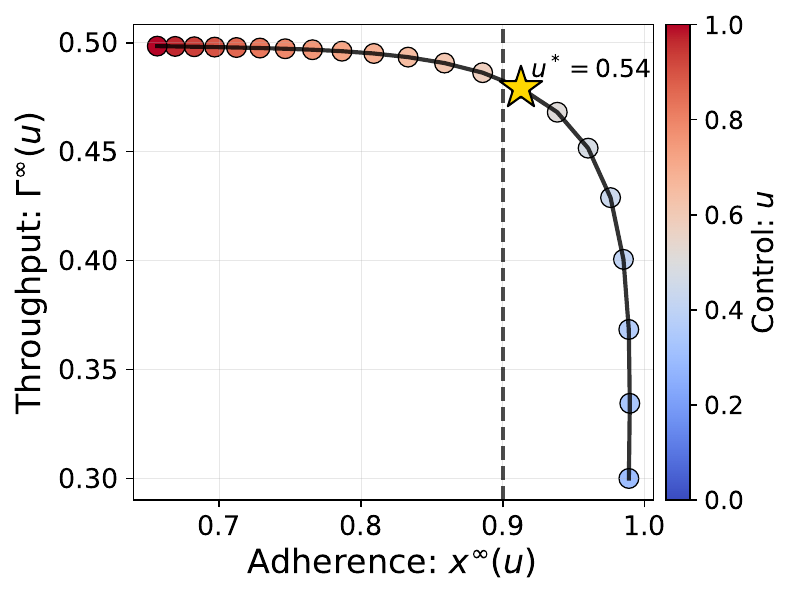}
\caption{Steady-state adherence--throughput frontier under constant control. 
Color indicates the control value $u$. 
The vertical line denotes the adherence constraint $\underline x=0.9$ and the star marks the optimal feasible control.}
\label{fig:frontier}
\end{figure}

\section{Conclusion}
We studied adherence-aware vehicle rebalancing under a mean-field control framework. 
Starting from a stochastic microscopic model of belief-driven participation, we derived a deterministic mean-field recursion for adherence dynamics and established well-posedness, equilibrium existence and uniqueness, and global convergence. 
The analysis reveals a structural trade-off between steady-state adherence and throughput under constant control. 
Exploiting this structure, we formulated and efficiently solved a control design problem that selects the largest recommendation intensity satisfying a prescribed adherence constraint.

Future work will study dynamic control policies and extend the framework to spatial mobility networks with data-driven parameter estimation.
\bibliographystyle{IEEEtran}
\bibliography{ref.bib}

\end{document}